\newcommand{\emptystr}{\varepsilon}
\newcommand{\nfreq}{\mathsf{NF}}
\newcommand{\nocc}{\mathsf{NOcc}}
\newcommand{\substr}{\mathcal{S}}
\newcommand{\rep}{\substr^{\mathsf{r}}}
\newcommand{\mr}{\substr^{\mathsf{mr}}}
\newcommand{\lmr}{\substr^{\mathsf{lmr}}}
\newcommand{\rmr}{\substr^{\mathsf{rmr}}}
\newcommand{\nsmr}{\substr^{\mathsf{nsmr}}}
\newcommand{\smr}{\substr^{\mathsf{smr}}}
\newcommand{\occ}{\mathsf{Occ}}
\newcommand{\pqcd}[2]{(#1, #2)}
\newcommand{\sufa}{\mathsf{SA}}
\newcommand{\bwt}{\mathsf{L}}
\newcommand{\LF}{\mathsf{LF}}
\newcommand{\FL}{\mathsf{FL}}
\newcommand{\lcontext}{\mathsf{lc}}
\newcommand{\rcontext}{\mathsf{rc}}
\newcommand{\runc}{\mathsf{L}'}
\newcommand{\rund}{\mathsf{d}}
\newcommand{\sufi}{\mathcal{I}}
\newcommand{\repr}{\mathsf{repr}}
\newcommand{\rext}{\mathsf{rlist}}
\newcommand{\revtrie}{\mathcal{T}}
\newcommand{\rdq}{\mathsf{RD}}
\newcommand{\idtt}[1]{\ensuremath{\mathtt{#1}}}
\title{R-enum Revisited: Speedup and Extension for Context-Sensitive Repeats and Net Frequencies}
\author{Kotaro Kimura}{Kyushu Institute of Technology, Japan}{kimura.kotaro365@mail.kyutech.jp}{}{}
\author{Tomohiro I}{Kyushu Institute of Technology, Japan}{tomohiro@ai.kyutech.ac.jp}{https://orcid.org/0000-0001-9106-6192}{KAKENHI 24K02899, JST AIP Acceleration Research JPMJCR24U4}
\titlerunning{R-enum Revisited: Speedup and Extension}
\authorrunning{K. Kimura and T. I}
\keywords{Supermaximal repeats; Largest maximal repeats; Net frequencies; Run-length Burrows-Wheeler transform; Compressed data mining}
\begin{document}
\maketitle              
\begin{abstract}
  A \emph{repeat} is a substring that occurs at least twice in a string, and is called a \emph{maximal repeat} if it cannot be extended outwards without reducing its frequency.
  Nishimoto and Tabei [CPM, 2021] proposed \emph{r-enum}, an algorithm to enumerate various characteristic substrings, including maximal repeats, in a string $T$ of length $n$ in $O(r)$ words of compressed working space, where $r \le n$ is the number of runs in the Burrows-Wheeler transform (BWT) of $T$.
  Given the run-length encoded BWT (RLBWT) of $T$, r-enum runs in $O(n \log \log_{w} (n/r))$ time in addition to the time linear to the number of output strings, where $w = \Theta(\log n)$ is the word size.
  In this paper, we first improve the $O(n \log \log_{w} (n/r))$ term to $O(n)$.
  We next extend r-enum to compute other context-sensitive repeats such as \emph{near-supermaximal repeats (NSMRs)} and \emph{supermaximal repeats}, as well as the \emph{context diversity} for every maximal repeat in the same complexities.
  Furthermore, we study net occurrences:
  An occurrence of a repeat is called a \emph{net occurrence} if it is not covered by another repeat, and the \emph{net frequency} of a repeat is the number of its net occurrences.
  With this terminology, an NSMR is a repeat with a positive net frequency.
  Given the RLBWT of $T$, we show how to compute the set $\nsmr$ of all NSMRs in $T$ together with their net frequency/occurrences in $O(n)$ time and $O(r)$ space.
  We also show that an $O(r)$-space data structure can be built from the RLBWT to compute the net frequency/occurrences of any pattern in optimal time.
  The data structure is built in $O(r)$ space and in $O(n)$ time with high probability or deterministic $O(n + |\nsmr| \log \log \min(\sigma, |\nsmr|))$ time, where $\sigma \le r$ is the alphabet size of $T$.
  To achieve this, we prove that the total number of net occurrences is less than $2r$.
  With the duality between net occurrences and \emph{minimal unique substrings (MUSs)},
  we get a new upper bound $2r$ of the number of MUSs in $T$, which may be of independent interest.
\end{abstract}

\section{Introduction}\label{sec:intro}
Finding characteristic substring patterns in a given string $T$ is a fundamental task for string processing.
One of the simplest patterns based on frequencies would be a \emph{repeat}, which occurs at least twice in $T$.
Since the number of distinct repeats in $T$ of length $n$ can be $\Theta(n^2)$, maximality is often utilized to reduce the number down to $O(n)$~\cite{1997Gusfield_AlgorOnStrinTreesAnd}.
A substring of $T$ is called a \emph{right-maximal repeat (RMR)} (resp. \emph{left-maximal repeat (LMR)}) if it cannot be extended to the right (resp. to the left) without reducing its number of occurrences in $T$.
If a repeat is both right-maximal and left-maximal, it is called a \emph{maximal repeat (MR)}.

Gusfield~\cite{1997Gusfield_AlgorOnStrinTreesAnd} also introduced stricter criteria to reduce the redundancy in maximal repeats.
A repeat is called a \emph{supermaximal repeat (SMR)} if it is not a substring of another repeat,
and a \emph{near-supermaximal repeat (NSMR)} if at least one of its occurrences is not covered by another repeat.
It is easy to see that an SMR is an NSMR, and an NSMR is an MR.
NSMRs were studied under the name of \emph{Chinese frequent strings} in~\cite{2001LinY_ExtracChinesFrequenStrinWithout,2004LinY_ProperAndFurthApplicOf,2024OhlebuschBO_FasterComputOfChinesFrequen_SPIRE}, and \emph{largest maximal repeats} in~\cite{2008NicolasRSPCDTVM_ModelLocalRepeatOnGenom,2013Galle_BagOfRepeatRepresOf_SIGIR}.
Gall{\'{e}} and Tealdi~\cite{2014GalleT_ContexDiverRepeatAndTheir_LATA,2018GalleT_EmphxRepeatNewTaxonOf} proposed the concept of \emph{context diversities} to define a general class of context-sensitive repeats including MRs and SMRs.

Enumerating these context-sensitive repeats has numerous applications in bioinformatics~\cite{1997Gusfield_AlgorOnStrinTreesAnd,2008NicolasRSPCDTVM_ModelLocalRepeatOnGenom,2009BecherDH_EfficComputOfAllPerfec} and natural language processing~\cite{2001LinY_ExtracChinesFrequenStrinWithout,2004LinY_ProperAndFurthApplicOf,2009OkanoharaT_TextCategWithAllSubst_SDM,2011MasadaTSO_ClustDocumWithMaximSubst,2013Galle_BagOfRepeatRepresOf_SIGIR} especially valuable for languages with no clear word segmentation such as Chinese and Japanese.
Motivated by these applications, efficient enumeration algorithms have been extensively studied~\cite{2009OkanoharaT_TextCategWithAllSubst_SDM,2009BecherDH_EfficComputOfAllPerfec,2015OhlebuschB_AlphabIndepAlgorForFindin,2020BelazzouguiCKM_LinearTimeStrinIndexAnd,2021NishimotoT_REnumEnumerOfCharac_CPM} using a family of suffix data structures such as suffix trees~\cite{1973Weiner_LinearPatterMatchAlgor}, directed acyclic word graphs (DAWGs)~\cite{1985BlumerBHECS_SmallAutomRecogSubworOf}, compact DAWGs (CDAWGs)~\cite{1987BlumerBHME_ComplInverFilesForEffic,2005InenagaHSTAMP_LineConstOfCompacDirec}, suffix arrays~\cite{1993ManberM_SuffixArrayNewMethodFor,JDA2004AbouelhodaKO_ReplacSuffixTreesWithEnhan} and Burrows-Wheeler transform (BWT)~\cite{1994BurrowsW_BlockSortinLosslDataCompr}.

In this paper, we focus on \emph{r-enum}~\cite{2021NishimotoT_REnumEnumerOfCharac_CPM}, an algorithm to enumerate the MRs, \emph{minimal unique substrings (MUSs)}~\cite{2011IlieS_MinimUniqueSubstAndMaxim} and \emph{minimal absent words (MAWs)} in $O(r)$ words of compressed working space, where $r \le n$ is the number of runs in the BWT of $T$ of length $n$.
Given the \emph{run-length encoded BWT (RLBWT)} of $T$, r-enum runs in $O(n \log \log_{w} (n/r))$ time in addition to the time linear to the number of output strings, where $w = \Theta(\log n)$ is the word size.
We show that $O(n \log \log_{w} (n/r))$ time can be improved to $O(n)$ by resorting to the \emph{move data structure} recently proposed by Nishimoto and Tabei~\cite{2021NishimotoT_OptimTimeQueriesOnBwt_ICALP} for constant-time LF-mapping on RLBWTs.
We also extend r-enum to compute the NSMRs and SMRs, and the \emph{context diversity} for every RMR in the same time and space complexities.

Furthermore, we consider problems for net occurrences and net frequencies, originated from the studies of Chinese frequent strings~\cite{2001LinY_ExtracChinesFrequenStrinWithout,2004LinY_ProperAndFurthApplicOf} and have recently attracted attention from combinatorial and algorithmic points of view~\cite{2024GuoEWZ_ExploitNewProperOfStrin_CPM,2024OhlebuschBO_FasterComputOfChinesFrequen_SPIRE,2024GuoUWZ_OnlinComputOfStrinNet_SPIRE,2024Inenaga_FasterAndSimplOnlinComput,2025GuoK_NetOccurInFibonAnd_CPM,2025MienoI_SpaceEfficOnlinComputOf_CPM}.
An occurrence of a repeat is called a \emph{net occurrence} if the occurrence is not covered by another repeat,
and the \emph{net frequency} of a repeat is the number of its net occurrences.
We would like to point out the fact, which may have been overlooked for over two decades, that the strings with positive net frequencies are equivalent to the NSMRs~\cite{1997Gusfield_AlgorOnStrinTreesAnd} and the \emph{largest maximal repeats}~\cite{2008NicolasRSPCDTVM_ModelLocalRepeatOnGenom,2013Galle_BagOfRepeatRepresOf_SIGIR}, and thus appear to have broader applications.
Also, net occurrences are conceptually equivalent to \emph{maximum repeats} in~\cite{2011IlieS_MinimUniqueSubstAndMaxim}.

Given the RLBWT of $T$, we show how to compute the set $\nsmr$ of all NSMRs in $T$ together with their net frequency/occurrences in $O(n)$ time and $O(r)$ words of space, solving the task called \textsc{All-NF} in~\cite{2024GuoEWZ_ExploitNewProperOfStrin_CPM} on RLBWTs.
We also address the task called \textsc{Single-NF}~\cite{2024GuoEWZ_ExploitNewProperOfStrin_CPM}, which requires building a data structure to support NF-queries of computing the net frequency and/or net occurrences of any query pattern.
We show that an $O(r)$-space data structure can be built from the RLBWT to support NF-queries in optimal time.
The data structure is built in $O(r)$ space and in $O(n)$ time with high probability or deterministic $O(n + |\nsmr| \log \log \min(\sigma, |\nsmr|))$ time, where $\sigma \le r$ is the alphabet size of $T$.
To achieve this, we prove that the total number of net occurrences is less than $2r$.
With the duality between net occurrences and MUSs shown in~\cite{2011IlieS_MinimUniqueSubstAndMaxim,2025MienoI_SpaceEfficOnlinComputOf_CPM},
we immediately get a new upper bound $2r$ of the number of MUSs in $T$, which may be of independent interest.

\subsection{Related work}
Context-sensitive repeats are closely related to and efficiently computed from suffix data structures.
A folklore algorithm~\cite{1997Gusfield_AlgorOnStrinTreesAnd} uses suffix trees.
To reduce the (rather big) working space of suffix trees by a constant factor, algorithms based on enhanced suffix arrays~\cite{JDA2004AbouelhodaKO_ReplacSuffixTreesWithEnhan} have been proposed.
A compact- or compressed-space solution can be achieved by using LF-mapping and range distinct queries (see \cref{sec:prelim}) on BWTs or RLBWTs to simulate a traversal on suffix link trees.
\cref{table:compare_enum} summarizes representative algorithms.
\begin{table}[t]
  \centering
  \begin{tabular}{l|l|l|l}
  \begin{tabular}{@{}@{}l} reference:\\ used techniques\end{tabular} 
         & space (bits)
         & running time 
         & patterns \\ \hline\hline
  \cite{1997Gusfield_AlgorOnStrinTreesAnd}: suffix tree
         & $O(n \log n)$ 
         & $O(n \log \sigma)$
         & MR, NSMR, SMR \\ \hline
  \cite{2009OkanoharaT_TextCategWithAllSubst_SDM}: enhanced suffix array
         & $O(n \log n)$ 
         & $O(n)$
         & MR \\ \hline
  \cite{2015OhlebuschB_AlphabIndepAlgorForFindin}: enhanced suffix array
         & $O(n \log n)$ 
         & $O(n)$
         & \begin{tabular}{@{}@{}l} MR, NSMR, SMR,\\ context diversity \end{tabular} \\ \hline
  \cite{2020BelazzouguiCKM_LinearTimeStrinIndexAnd}: BWT and RD
         & $O(n \log \sigma)$
         & $O(n)$
         & MR, MUS, MAW \\ \hline
  \cite{2012BellerBO_SpaceEfficComputOfMaxim_SPIRE}: BWT and RD
         & $|\mathit{RD}| + O(n)$ 
         & $O(nd)$
         & MR, SMR \\ \hline
  \cite{2015BelazzouguiC_SpaceEfficDetecOfUnusual_SPIRE}: BWT and RD
         & $|\mathit{RD}| + O(\sigma^2 \log^2 n)$ 
         & $O(nd)$
         & MR, MUS, MAW\\ \hline
  \cite{2021NishimotoT_REnumEnumerOfCharac_CPM}: RLBWT and RD
         & $O(r \log n)$
         & $O(n \log \log_{w} (n/r))$
         & MR, MUS, MAW \\ \hline
  \begin{tabular}{@{}@{}l} this work:\\RLBWT and RD \end{tabular}
         & $O(r \log n)$
         & $O(n)$
         & \begin{tabular}{@{}@{}l} MR, MUS, MAW,\\ NSMR, SMR,\\ context diversity \end{tabular} \\  
  \end{tabular}

  \caption{Comparison between algorithms that find characteristic substring patterns in a string $T$ of length $n$ over an integer alphabet of size $\sigma = n^{O(1)}$, where $w = \Theta(\log n)$ is the word size.
  Each method takes $T$, its BWT or $O(r)$-size RLBWT as an input, and enumerate substring patterns represented in $O(1)$ space each.
  For MAWs, we additionally need the time linear to the number of output, which is bounded by $O(n\sigma)$.
  RD represents a data structure for range distinct queries, which can be implemented in $|\mathit{RD}|$ bits of space while supporting a query in $O(d)$ time per output with $(|\mathit{RD}|, d) = (n \log_2 \sigma + o(n \log \sigma), O(\log \sigma))$ using~\cref{lem:rdq} or $(|\mathit{RD}|, d) = (n \log_2 \sigma + o(n \log \sigma), O(\log \sigma))$~\cite{2015ClaudeNP_WavelMatrixEfficWavelTree}.}
  \label{table:compare_enum}
\end{table}

The study for net frequencies/occurrences was initiated recently by Guo et al.~\cite{2024GuoEWZ_ExploitNewProperOfStrin_CPM}.
Given a string $T$ of length $n$ over an integer alphabet of size $\sigma = n^{O(1)}$, we can perform \textsc{All-NF} in $O(n)$ time and words of space using an algorithm~\cite{2015OhlebuschB_AlphabIndepAlgorForFindin,2024OhlebuschBO_FasterComputOfChinesFrequen_SPIRE} based on enhanced suffix arrays.
The problem has also been studied in an online setting, where we maintain a data structure to perform \textsc{All-NF} and \textsc{Single-NF} while allowing monotonic updates to the left and/or right ends of $T$.
The current state-of-the-art online algorithm~\cite{2025MienoI_SpaceEfficOnlinComputOf_CPM} is based on online construction of implicit suffix trees and implicit CDAWGs.
In this paper, we achieve the first offline algorithm for \textsc{All-NF} and \textsc{Single-NF} that works in $O(r)$ words of compressed space.

\subsection{Organization of paper}
This paper is organized as follows.
In \cref{sec:prelim}, we introduce notation and tools we use in the following sections.
In \cref{sec:bound}, we prove that the total number of net occurrences is less than $2r$, which is crucial to bound the time and space complexities of our algorithms for net frequencies/occurrences.
In \cref{sec:renum}, we speed up r-enum and extend it to enumerate other context-sensitive repeats.
In \cref{sec:NFqueries}, we present an $O(r)$-space data structure for NF-queries based on RLBWTs.
In \cref{sec:conclusions}, we conclude the paper with some remarks.

\section{Preliminaries}\label{sec:prelim}
\subsection{Basic notation}
We assume a standard word-RAM model with word size $w = \Theta(\log n)$, where $n$ is the length of string $T$ defined later.
A space complexity is evaluated by the number of words unless otherwise noted.
Any integer treated in this paper is represented in $O(1)$ space, i.e., $O(w)$ bits.
A set of consecutive integers is called an \emph{interval}, which can be represented in $O(1)$ space by storing its beginning and ending integers.
For two integers $b$ and $e$, let $[b..e]$ denote the interval $\{ b, b+1, \dots, e \}$ beginning at $b$ and ending at $e$ if $b \le e$, and otherwise, the empty set.
We also use $[b..e) := [b..e-1]$ to exclude the right-end integer.

A \emph{string} (or an \emph{array}) $x$ over a set $S$ is a sequence $x[1]x[2] \cdots x[|x|]$, where $|x|$ denotes the length of $x$ and $x[i] \in S$ is the $i$-th element of $x$ for any $i \in [1..|x|]$.
The string of length $0$ is called the \emph{empty string} and denoted by $\emptystr$.
Let $S^{*}$ denote the set of strings over $S$, and let $S^{n} \subset S^{*}$ be the set of strings of length $n \ge 0$.
For integers $1 \le b \le e \le |x|$, the \emph{substring} of $x$ beginning at $b$ and ending at $e$ is denoted by $x[b..e] = x[b]x[b+1] \cdots x[e]$.
For convenience, let $x[b..e]$ with $b > e$ represent the empty string.
We also use $x[b..e) = x[b..e-1]$ to exclude the right-end character.
A \emph{prefix} (resp. \emph{suffix}) of $x$ is a substring that matches $x[1..e]$ (resp. $x[b..|x|]$) for some $e \in [0..|x|]$ (resp. $b \in [1..|x|+1]$).
We use the abbreviation $x[..e] = x[1..e]$ and $x[b..] = x[b..|x|]$.
For a non-negative integer $d$, let $x^{d}$ denote the string of length $|x|d$ such that $x^{d}[|x|(i-1)+1..i|x|] = x$ for any $i \in [1..d]$.

\subsection{Tools}\label{sec:tools}
A function that injectively maps a set $S$ of integers to $[1..O(|S|)]$ can be used as a \emph{compact dictionary} for $S$ to look up a value associated with an integer in $S$.
We use the following known result:
\begin{lemma}[\cite{2000Willard_ExaminComputGeometVanEmde,2008Ruzic_ConstEfficDictionInClose_ICALP}]\label{lem:dic}
  Given $S \subseteq [1..u]$ of size $m \le u \le 2^{O(w)}$, we can build an $O(m)$-size dictionary in $O(m)$ time with high probability or deterministic $O(m \log \log m)$ time so that lookup queries can be supported in $O(1)$ worst-case time.
\end{lemma}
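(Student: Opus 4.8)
The plan is to prove both the randomized and the deterministic construction bounds by reducing, in each case, to the construction of a static perfect hash function, after which the dictionary is realized by storing at the hashed position of each $x \in S$ both a copy of the key $x$ (for membership verification) and its associated value. A lookup for $y$ then computes the hash, reads back the stored key, and reports the value only when it equals $y$; this yields worst-case $O(1)$ lookups once the hash function and the table are in place. Thus the whole task reduces to building, within the stated time bounds, a perfect hash function $h \colon S \to [1..O(m)]$ together with an $O(m)$-word representation of it, where the $O(1)$-word arithmetic of the word-RAM makes each evaluation of $h$ take constant time even though $u \le 2^{O(w)}$.

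For the randomized $O(m)$ bound I would use the two-level \emph{FKS} scheme. First fix a prime $p$ with $u < p \le 2u$ and draw a multiplier $a$ uniformly from $[1..p)$, setting the top-level function $g(x) = (a x \bmod p) \bmod m$. A standard second-moment computation shows that the expectation of $\sum_{i} b_i^2$, where $b_i = |\{ x \in S : g(x) = i \}|$, is $O(m)$; Markov's inequality makes this hold up to a constant factor with constant probability, and $O(\log n)$ independent retries boost this to high probability. For each bucket $i$ I then allocate a second-level table of size $\Theta(b_i^2)$ and, by sampling another multiplier, find a collision-free placement of its $b_i$ keys; each trial succeeds with constant probability, so all buckets are resolved in $O(m)$ expected time, again amplified to high probability. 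Summing the table sizes gives $O(m)$ words, and two evaluations of the $(ax \bmod p)\bmod\,\cdot$ form give the $O(1)$ lookup.

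For the deterministic $O(m \log\log m)$ bound the main obstacle is precisely that we can no longer \emph{sample} a good multiplier and must instead \emph{locate} one by an efficient search. Here I would follow the approach behind the cited constructions: first collapse the universe to a polynomially bounded range by deterministically selecting a reduction function that keeps all $m$ keys distinct, testing candidate multipliers from a small explicit family and certifying injectivity by a near-linear-time collision count. Once the keys live in a polynomial universe, a perfect hash function is built deterministically in $O(m \log\log m)$ time via the splitting/signature technique of the cited dictionary, the $\log\log m$ factor arising from the depth of the recursive partition of the key set. The delicate part throughout is bounding the cost of these deterministic searches — verifying that a suitable function provably exists (by the probabilistic method) and can be exhibited in the claimed time, e.g. by the method of conditional expectations over the chosen family — after which the membership-verification trick from the first paragraph upgrades the perfect hash into an exact dictionary.
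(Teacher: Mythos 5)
The paper does not prove this lemma at all: it is imported as a black-box result, attributed to Willard and to Ru\v{z}i\'{c}, so there is no ``paper proof'' to compare against. Your sketch is therefore judged on its own merits, and while the overall architecture (reduce a dictionary to a perfect hash function plus stored keys for membership verification, use two-level FKS hashing for the randomized bound, and a deterministic universe-reduction-plus-recursive-splitting scheme for the other bound) is the right classical material, the randomized half has a genuine gap. The lemma claims construction in $O(m)$ time \emph{with high probability}, but your amplification argument --- Markov's inequality gives constant success probability per trial, then ``$O(\log n)$ independent retries boost this to high probability'' --- only shows that the construction succeeds within $O(\log n)$ trials w.h.p., i.e.\ that the running time is $O(m \log n)$ w.h.p.\ (equivalently $O(m)$ in expectation). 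To get $O(m)$ time w.h.p.\ you would need the very first $O(1)$ trials to succeed with probability $1 - 1/\mathrm{poly}(n)$, which pairwise-independent hashing plus Markov does not give; the known $O(m)$-w.h.p.\ constructions rely on stronger hash families or higher-moment/concentration arguments at the top level, and this is precisely the content one is borrowing from the cited works.

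The deterministic half is closer to a pointer than a proof: ``the splitting/signature technique of the cited dictionary, the $\log\log m$ factor arising from the depth of the recursive partition'' names the source of the bound without establishing it, and the deterministic universe reduction (finding an injective modular map by exhaustive search over a small family) is itself nontrivial to do in $O(m\log\log m)$ total time. For a lemma that the paper itself only cites, this level of detail is arguably appropriate, but you should either tighten the w.h.p.\ argument as above or state explicitly that both bounds are taken from the references rather than re-derived.
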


A \emph{predecessor query} for an integer $i$ over the set $S$ of integers asks to compute $\max \{ j \in S \mid j \le i \}$, which can be efficiently supported by the data structure of~\cite[Appendix A]{2015BelazzouguiN_OptimLowerAndUpperBound}:
\begin{lemma}[\cite{2015BelazzouguiN_OptimLowerAndUpperBound}]\label{lem:pred}
  Given $S \subseteq [1..u]$ of size $m \le u \le 2^{O(w)}$, we can build an $O(m)$-size data structure in $O(m \log \log_{w}(u/m))$ time and $O(m)$ space to support predecessor queries in $O(\log \log_{w}(u/m))$ time for any $i \in [1..n]$.
\end{lemma}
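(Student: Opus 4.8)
The plan is to obtain the stated running time as the optimal time for predecessor search in linear space, by assembling two classical ingredients into one layered structure and then sharpening the universe from $u$ to $u/m$. The first ingredient is a \emph{fusion node}, which packs $w^{\Theta(1)}$ keys into $O(1)$ machine words and resolves a local predecessor among them in $O(1)$ time by word-level sketching; a B-tree of such nodes (a fusion tree) answers queries in $O(\log_w m)$ time and $O(m)$ space. The second is a \emph{van Emde Boas}/x-fast-trie recursion, which searches the bit-length of the keys by repeated halving and answers in $O(\log\log u)$ time. The target $O(\log\log_w(u/m))$ interpolates between these two extremes, and the whole point of the construction is to pay the smaller of the two costs no matter how $u$, $m$, and $w$ relate.

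First I would run the van Emde Boas recursion on the $\log u$ bits of the keys, but \emph{truncated}: instead of recursing down to a single bit, I stop as soon as the current sub-universe fits in $O(1)$ machine words — i.e.\ has $O(\log w)$ bits — and finish the query with a single fusion node (or a tabulated word operation) in $O(1)$ time. Locating the level at which the query path diverges costs one hash lookup per halving of the bit-length, and the number of halvings needed to reduce $\log u$ down to $\Theta(\log w)$ is $O\!\left(\log\frac{\log u}{\log w}\right) = O(\log\log_w u)$. This already yields query time $O(\log\log_w u)$, and the base $w$ rather than $2$ comes precisely from stopping the recursion at $\log w$ bits rather than at one bit, i.e.\ from delegating the bottom $\Theta(\log w)$ bits to word parallelism.

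To replace $u$ by $u/m$, I would prepend a \emph{universe-reduction} step: partition $[1..u]$ into $m$ blocks of width $u/m$, treat the coarse ``which block'' predecessor problem over the $\le m$ nonempty block indices (whose universe is only $[1..m]$) with a small top structure, and run the truncated van Emde Boas structure on the residual universe of width $u/m$ inside each nonempty block. With only $m$ keys the coarse block-index problem is small, and — up to the careful balancing discussed below — the dominant cost is the residual search, $O(\log\log_w(u/m))$; in the dense regime $u/m = w^{O(1)}$ the base case triggers at once and the whole query is $O(1)$, consistent with the bound. Space is kept linear by the y-fast-trie idea of grouping keys into buckets and building the navigation structure only over one representative per bucket, and every hash table is built via Lemma~\ref{lem:dic} so that lookups are worst-case $O(1)$; feeding the $O(m)$ keys through the $O(\log\log_w(u/m))$ levels then gives the stated build time.

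The main obstacle is attaining the base-$w$ logarithm and the $u/m$ (rather than $u$) argument \emph{simultaneously} while holding the space to exactly $O(m)$. The delicate points are (a) the word-packing inside the fusion node, so that the compressed sketch of $w^{\Theta(1)}$ keys is computed and searched by an $O(1)$-length sequence of arithmetic and masking operations, and (b) making the universe reduction, the truncated van Emde Boas levels, and the bucketing share the same $O(m)$ representatives, so that no layer stores more than $O(1)$ amortized words per key and no extra logarithmic factor creeps into the space or the query time. A closely related subtlety is balancing the coarse block-index cost against the residual cost across all regimes of $u$, $m$, and $w$, so that the two add up to $O(\log\log_w(u/m))$ rather than, say, $O(\log\log_w m + \log\log_w(u/m))$. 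Composing these pieces without such a blow-up is exactly where the careful accounting of the known construction lies.
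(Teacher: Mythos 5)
This lemma is stated in the paper without proof: it is imported verbatim (as a citation to Appendix~A of Belazzougui and Navarro, which in turn rests on Patrascu--Thorup predecessor search), so there is no in-paper argument to compare your attempt against. Judged on its own terms, your sketch is a faithful reconstruction of the standard construction behind this bound: a constant-time top structure over the block indices --- note this is a \emph{dense} problem in universe $[1..m]$ with at most $m$ keys, answerable in $O(1)$ time by rank/select on a length-$m$ bitvector, so the additive $\log\log_w m$ term you worry about balancing away never actually arises --- followed by a van Emde Boas recursion on the residual $\log(u/m)$ bits, truncated once the sub-universe is small enough for word-packed (fusion-node) search, with y-fast-trie bucketing to hold the space to $O(m)$ words. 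The one genuine loose end is the construction time: you propose building every level's hash table via \Cref{lem:dic}, but that lemma achieves $O(k)$ construction only with high probability (deterministically it costs $O(k\log\log k)$), whereas \Cref{lem:pred} claims an unconditional $O(m\log\log_w(u/m))$ build time; obtaining that deterministically requires the specific dictionary machinery of the cited source rather than a direct appeal to \Cref{lem:dic}. Since the paper uses the lemma purely as a black box (and, in \Cref{theo:r-enum}, only in the regime where its cost is dominated anyway), none of this affects the paper's results.
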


A \emph{range distinct query} $\rdq(x, p, q)$ for a string $x$ and integers $1 \le p \le q \le |x|$ asks to enumerate $(c, p_c, q_c)$ such that $p_c$ and respectively $q_c$ are the smallest and largest positions in $[p..q]$ with $x[p_c] = x[q_c] = c$ for every distinct character $c$ that occurs in $x[p..q]$.
It is known that a range distinct query can be supported in output-optimal time after preprocessing $x$.
\begin{lemma}[\cite{2002Muthukrishnan_EfficAlgorForDocumRetriev_SODA,2020BelazzouguiCKM_LinearTimeStrinIndexAnd}]\label{lem:rdq}
  Given a string $x \in [1..\sigma]^{m}$, we can build an $O(m)$-size data structure in $O(m)$ time and space to support range distinct queries.
  With a null-initialized array of size $\sigma$, each query can be answered in $O(k)$ time, where $k$ is the number of outputs.
\end{lemma}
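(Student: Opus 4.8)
The plan is to reduce each range distinct query to range-minimum and range-maximum queries (RMQs) on two auxiliary arrays, following the classical document-listing technique of Muthukrishnan. In the preprocessing phase I would first compute, by one left-to-right and one right-to-left scan that maintains the last seen position of each character in a size-$\sigma$ table, the arrays $\mathsf{prev}$ and $\mathsf{next}$, where $\mathsf{prev}[i]$ is the largest $j < i$ with $x[j] = x[i]$ (and $0$ if none) and $\mathsf{next}[i]$ is the smallest $j > i$ with $x[j] = x[i]$ (and $m+1$ if none); this costs $O(m)$ time and space. I would then build on $\mathsf{prev}$ a range-minimum structure and on $\mathsf{next}$ a range-maximum structure, each constructible in $O(m)$ time and $O(m)$ space and answering an argmin/argmax query in $O(1)$ time by the standard linear-space RMQ result.

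The key observation is that, for a query $\rdq(x,p,q)$, a position $i \in [p..q]$ is the leftmost occurrence $p_c$ of its character $c = x[i]$ in $[p..q]$ if and only if $\mathsf{prev}[i] < p$, and symmetrically it is the rightmost occurrence $q_c$ if and only if $\mathsf{next}[i] > q$. To list all leftmost occurrences I would run the usual recursive RMQ enumeration on $[p..q]$: compute the argmin position $i^\star$ of $\mathsf{prev}$ over the current subinterval; if $\mathsf{prev}[i^\star] \ge p$ the subinterval contributes nothing and the call returns, otherwise $i^\star$ qualifies and is reported, and the search recurses on the two subintervals flanking $i^\star$. Exactly $k$ positions satisfy $\mathsf{prev}[i] < p$ (one per distinct character), and every reporting call spawns at most two children, so the recursion issues $O(k)$ RMQs in $O(k)$ total time; the rightmost occurrences are enumerated symmetrically with range-maximum queries on $\mathsf{next}$.

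To assemble the output triples I would use the null-initialized array $A$ of size $\sigma$ from the statement: during the leftmost phase, reporting a position $i$ sets $A[x[i]] \gets i$, and during the rightmost phase, reporting a position $i$ reads $p_c = A[x[i]]$, emits $(x[i], p_c, i)$, and resets $A[x[i]]$ to null, so the array is restored to its null-initialized state for the next query and no $O(\sigma)$ reinitialization is incurred. The step I expect to require the most care is the $O(k)$ analysis of the recursive enumeration, namely bounding the number of calls that return without reporting; this follows by charging each such non-reporting call to the reporting parent that spawned it, so that the total number of calls is $O(k)$. Everything else rests on the cited linear-time, constant-query RMQ construction and the routine $\mathsf{prev}/\mathsf{next}$ precomputation.
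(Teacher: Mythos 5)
Your proposal is correct and reconstructs exactly the standard construction behind the cited result: the paper itself gives no proof of \Cref{lem:rdq} (it defers to Muthukrishnan and to Belazzougui et al.), and the technique in those references is precisely the $\mathsf{prev}$/$\mathsf{next}$ arrays with linear-space constant-time RMQ and the recursive argmin/argmax enumeration, with the $O(k)$ bound obtained by charging each non-reporting call to its reporting parent. Your use of the null-initialized size-$\sigma$ array to pair each $p_c$ with its $q_c$ and then restore the array also matches the intended role of that array in the lemma statement, and is consistent with the paper's remark that the output need not be sorted by character.
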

A minor remark is that the output of range distinct queries of~\cref{lem:rdq} may not be sorted in any order of characters.

\subsection{Context-sensitive repeats}\label{sec:csr}
Let $\Sigma_{\$} = [1..\sigma]$ be an ordered alphabet with the smallest character $\$$, and let $\Sigma = \Sigma_{\$} \setminus \{ \$ \}$.
Throughout this paper, $T[1..n] \in \Sigma_{\$}^{n}$ is a string of length $n \ge 2$ that contains all characters in $\Sigma_{\$}$.
\footnote{If $T$ is a string over a larger inefficient alphabet, we first replace every occurrence of a character with its rank in the character set actually used in $T$ to satisfy the assumption. Given the RLBWT of $T$ of size $r$, this task can be done in $O(n)$ time and $O(r)$ space using the sorting algorithm of~\cite[Lemma 13]{2020NishimotoT_FasterQueriesOnBwtRuns_X3}.}
We assume for convenience that $\$$ is used as a sentinel that exists at $T[n]$ and $T[0]$, and does not occur in $T[1..n-1]$.
A position $b \in [1..n]$ is called an \emph{occurrence} of a string $x \in \Sigma_{\$}^{*}$ in $T$ if $T[b..b+|x|) = x$.
Let $\occ(x)$ denote the set of occurrences of $x$ in $T$.
The \emph{left context} $\lcontext(x)$ (resp. \emph{right context} $\rcontext(x)$) of a substring $x \in \Sigma^{*}$ of $T$ is the set of characters immediately precedes (resp. follows) the occurrences of $x$, i.e., $\lcontext(x) = \{ T[b-1] \mid b \in \occ(x) \}$ (resp. $\rcontext(x) = \{ T[b+|x|] \mid b \in \occ(x) \}$).
The \emph{context diversity} of $x$ is the pair $\pqcd{|\lcontext(x)|}{|\rcontext(x)|}$ of integers in $[1..\sigma]$.
A substring $x$ of $T$ is called a \emph{repeat} if $|\occ(x)| \ge 2$, and called \emph{unique} if $|\occ(x)| = 1$.
A repeat $x \in \Sigma^{*}$ in $T$ is a \emph{left-maximal repeat (LMR)} (resp. \emph{right-maximal repeat (RMR)}) if $|\lcontext(x)| > 1$ (resp. $|\rcontext(x)| > 1$), and is a \emph{maximal repeat (MR)} if $x$ is both left- and right-maximal.
A repeat $x$ in $T$ is a \emph{supermaximal repeat (SMR)} if it is not a substring of another repeat.
An occurrence $b$ of a repeat $x$ is called a \emph{net occurrence} if there is no occurrence $b'$ of another repeat $x'$ such that $[b..b+|x|) \subset [b'..b'+|x'|)$.
Let $\nocc(x) \subseteq \occ(x)$ be the set of net occurrences of $x$ and let $\nfreq(x) := |\nocc(x)|$ be the number of net occurrences called the \emph{net frequency} of $x$.
A \emph{near-supermaximal repeat (NSMR)} is a repeat with a positive net frequency.
Let $\substr$, $\rep$, $\lmr$, $\rmr$, $\mr$, $\nsmr$ and $\smr$ be the sets of all substrings, repeats, LMRs, RMRs, MRs, NSMRs and SMRs in $T[1..n]$, respectively.

To describe algorithms uniformly, we assume that the empty string is a repeat that occurs at every position $b$ in $[1..n]$, i.e., $\occ(\emptystr) = [1..n]$.
Since $\lcontext(\emptystr) = \rcontext(\emptystr) = \Sigma_{\$}$, $\emptystr$ is always in $\lmr$, $\rmr$ and $\mr$.
We set the following exception for the definition of net occurrences of $\emptystr$:
An occurrence $b \in \occ(\emptystr)$ is a net occurrence of $\emptystr$ if and only if $T[b-1]$ and $T[b]$ are both unique.
This exception provides a smooth transition to another characterization of net occurrences based on the uniqueness of extended net occurrences~\cite{2024GuoEWZ_ExploitNewProperOfStrin_CPM,2025MienoI_SpaceEfficOnlinComputOf_CPM}.

Table~\ref{table:arrays} summarizes the concepts introduced in this subsection for $T = \idtt{abcbbcbcabc\$}$.

\begin{table}[t]
  \newcommand{\koko}[1]{\textcolor{red}{\underline{#1}}}
  \begin{tabular}{|l|l|l|l|l|c|}
  \hline
  $x \in \rep$ & occurrences & $\occ(x)$ & $\lcontext(x)$ & $\rcontext(x)$ & context diversity \\ \hline
  $\emptystr$ & \idtt{\koko{\,}a\koko{\,}b\koko{\,}c\koko{\,}b\koko{\,}b\koko{\,}c\koko{\,}b\koko{\,}c\koko{\,}a\koko{\,}b\koko{\,}c\koko{\,}\$\,}
    & $[1..12]$            & $\{\idtt{\$}, \idtt{a}, \idtt{b}, \idtt{c}\}$          & $\{\idtt{\$}, \idtt{a}, \idtt{b}, \idtt{c}\}$ & $(4, 4)$ \\
  \idtt{a}   & \idtt{\,\koko{a}\,b\,c\,b\,b\,c\,b\,c\,\koko{a}\,b\,c\,\$\,}
    & $\{1, 9\}$            & $\{\idtt{\$}, \idtt{c}\}$          & $\{\idtt{b}\}$ & $(2, 1)$ \\
  \idtt{b}   & \idtt{\,a\,\koko{b}\,c\,\koko{b}\,\koko{b}\,c\,\koko{b}\,c\,a\,\koko{b}\,c\,\$\,}
    & $\{2, 4, 5, 7, 10 \}$ & $\{\idtt{a}, \idtt{b}, \idtt{c}\}$ & $\{\idtt{b}, \idtt{c}\}$ & $(3, 2)$ \\
  \idtt{c}   & \idtt{\,a\,b\,\koko{c}\,b\,b\,\koko{c}\,b\,\koko{c}\,a\,b\,\koko{c}\,\$\,}
    & $\{3, 6, 8, 11 \}$    & $\{\idtt{b}\}$                     & $\{\idtt{\$}, \idtt{a}, \idtt{b}\}$ & $(1, 3)$ \\
  \idtt{ab}   & \idtt{\,\koko{a\,b}\,c\,b\,b\,c\,b\,c\,\koko{a\,b}\,c\,\$\,}
    & $\{1, 9\}$            & $\{\idtt{\$}, \idtt{c}\}$          & $\{\idtt{c}\}$ & $(2, 1)$ \\
  \idtt{bc}   & \idtt{\,a\,\koko{b\,c}\,b\,\koko{b\,c}\,\koko{b\,c}\,a\,\koko{b\,c}\,\$\,}
    & $\{2, 5, 7, 10\}$     & $\{\idtt{a}, \idtt{b}, \idtt{c}\}$ & $\{\idtt{\$}, \idtt{a}, \idtt{b}\}$ & $(3, 3)$ \\
  \idtt{cb}   & \idtt{\,a\,b\,\koko{c\,b}\,b\,\koko{c\,b}\,c\,a\,b\,c\,\$\,}
    & $\{3, 6\}$            & $\{\idtt{b}\}$                     & $\{\idtt{b}, \idtt{c}\}$ & $(1, 2)$ \\
  \idtt{abc}  & \idtt{\,\koko{a\,b\,c}\,b\,b\,c\,b\,c\,\koko{a\,b\,c}\,\$\,}
    & $\{1, 9\}$            & $\{\idtt{\$}, \idtt{c}\}$          & $\{\idtt{\$}, \idtt{b}\}$ & $(2, 2)$ \\
  \idtt{bcb}  & \idtt{\,a\,\koko{b\,c\,b}\,\koko{b\,c\,b}\,c\,a\,b\,c\,\$\,}
    & $\{2, 5\}$            & $\{\idtt{a}, \idtt{b}\}$           & $\{\idtt{b}, \idtt{c}\}$ & $(2, 2)$  \\  
  \hline
  \end{tabular}

  \begin{tabular}{|l|l|l|c|c|c|c|c|c|}
  \hline
  $x \in \rep$ & net occurrences & $\nocc(x)$ & $\nfreq(x)$ & LMR & RMR & MR & NSMR & SMR  \\ \hline
  $\emptystr$& \idtt{\,a\,b\,c\,b\,b\,c\,b\,c\,a\,b\,c\,\$\,}
    & $\emptyset$   & 0 & $\surd$ & $\surd$ & $\surd$ & & \\
  \idtt{a}   & \idtt{\,a\,b\,c\,b\,b\,c\,b\,c\,a\,b\,c\,\$\,}
    & $\emptyset$   & 0 & $\surd$ & & & & \\
  \idtt{b}   & \idtt{\,a\,b\,c\,b\,b\,c\,b\,c\,a\,b\,c\,\$\,}
    & $\emptyset$   & 0 & $\surd$ & $\surd$ & $\surd$ & & \\
  \idtt{c}   & \idtt{\,a\,b\,c\,b\,b\,c\,b\,c\,a\,b\,c\,\$\,}
    & $\emptyset$   & 0 & & $\surd$ & & & \\
  \idtt{ab}   & \idtt{\,a\,b\,c\,b\,b\,c\,b\,c\,a\,b\,c\,\$\,}
    & $\emptyset$   & 0 & $\surd$ &  & & & \\
  \idtt{bc}   & \idtt{\,a\,b\,c\,b\,b\,c\,\koko{b\,c}\,a\,b\,c\,\$\,}
    & $\{7\}$       & 1 & $\surd$ & $\surd$ & $\surd$ & $\surd$ & \\
  \idtt{cb}   & \idtt{\,a\,b\,c\,b\,b\,c\,b\,c\,a\,b\,c\,\$\,}
    & $\emptyset$   & 0 & & $\surd$ & & & \\
  \idtt{abc}  & \idtt{\,\koko{a\,b\,c}\,b\,b\,c\,b\,c\,\koko{a\,b\,c}\,\$\,}
    & $\{1, 9\}$    & 2 & $\surd$ & $\surd$ & $\surd$ & $\surd$ & $\surd$ \\
  \idtt{bcb}  & \idtt{\,a\,\koko{b\,c\,b}\,\koko{b\,c\,b}\,c\,a\,b\,c\,\$\,}
    & $\{2, 5\}$    & 2 & $\surd$ & $\surd$ & $\surd$ & $\surd$ & $\surd$ \\  
  \hline
  \end{tabular}

  \caption{An example of the concepts introduced in \cref{sec:csr} for $T = \idtt{abcbbcbcabc\$}$. For every repeat $x \in \rep$, we show a visualization of the occurrences of $x$, $\occ(x)$, $\lcontext(x)$, $\rcontext(x)$, and the context diversity of $x$ in the upper table.
  In the lower table, we show a visualization of the net occurrences of $x$, $\nocc(x)$, $\nfreq(x)$, and subsequently check if $x$ is in each of $\lmr$, $\rmr$, $\mr$, $\nsmr$ and $\smr$.}
  \label{table:csr}
\end{table}

\subsection{Suffix trees and suffix arrays}
The \emph{suffix trie} of $T$ is the tree defined on the set $\substr$ of nodes and edges from $x \in \substr$ to $xc \in \substr$ with $c \in \Sigma_{\$}$.
The \emph{suffix tree} of $T$ is the compacted suffix trie in which every non-branching internal node of the suffix trie is removed and considered to be an \emph{implicit node}.
Every internal node of the suffix tree is a right-maximal repeat $x$ and has $|\rcontext(x)|$ children.
A link from a node $x$ to a (possibly implicit) node $ax \in \substr$ for some $a \in \Sigma_{\$}$ is called a \emph{Weiner link}.
It is known that every node can be reached by following Weiner links from the root $\emptystr$, and the total number of Weiner links is at most $3n$, which can be seen from the duality between suffix trees and DAWGs~\cite{1985BlumerBHECS_SmallAutomRecogSubworOf}.

The \emph{suffix array} $\sufa[1..n]$ of $T$ is the integer array such that, for any $i \in [1..n]$, $T[\sufa[i]..]$ is the lexicographically $i$-th suffix among the non-empty suffixes of $T$.
For any string $x \in \substr$, the \emph{SA-interval} of $x$, denoted by $\sufi(x)$, is the maximal interval $[p..q]$ such that $x$ is a prefix of $T[\sufa[i]..]$ for any $i \in [p..q]$.
Then, it holds that $\occ(x) = \{ \sufa[i] \mid i \in \sufi(x)\}$.

\subsection{Burrows-Wheeler transform}
The \emph{Burrows-Wheeler transform (BWT)} $\bwt[1..n]$ of $T$ is the string such that $\bwt[i] = T[\sufa[i]-1]$ for any $i \in [1..n]$.
Note that $T[\sufa[i]-1] = T[0] = \$$ for $i$ with $\sufa[i] = 1$ due to the sentinel at $T[0]$.
By definition, it holds that $\lcontext(x) = \{ \bwt[i] \mid i \in \sufi(x) \}$ for any string $x$.
The LF-mapping $\LF$ is the permutation on $[1..n]$ such that $\LF(i)$ represents the lexicographic rank of the suffix $T[\sufa[i]-1..]$ if $\sufa[i] \neq 1$, and otherwise $\LF(i) = 1$.
The FL-mapping $\FL$ is the inverse mapping of $\LF$.
Given an integer $i$ in $\sufi(x)$ for a string $x$, we can compute $x$ by using FL-mapping $|x|$ times from $i$ because $x[k] = \bwt[\FL^{k}(i)]$ holds for any $k~(1 \le k \le |x|)$, where $\FL^{k}(i)$ is defined recursively $\FL^{k}(i) = \FL(\FL^{k-1}(i))$ with the base case $\FL^{1}(i) = \FL(i)$.

\cref{table:arrays} shows an example of $\sufa$, $\LF(\cdot)$, $\FL(\cdot)$ and $\bwt$ for $T = \idtt{abcbbcbcabc\$}$.
\cref{fig:st} also illustrates the suffix tree of $T = \idtt{abcbbcbcabc\$}$ and a relationship between Weiner links and BWTs.

\begin{table}[t]
  \centering
  \begin{tabular}{|c||c|c|c|c|l|}
  \hline
  \multicolumn{1}{|c||}{$i$} & \multicolumn{1}{c|}{$\sufa[i]$} & \multicolumn{1}{c|}{$\LF(i)$} & \multicolumn{1}{c|}{$\FL(i)$} & \multicolumn{1}{c|}{$\bwt[i]$} & \multicolumn{1}{c|}{$T[\sufa[i]..]$} \\ \hline
  1   & 12  & 9   & 3  & \idtt{c}   & \idtt{\$} \\
  2   & 9   & 10  & 5  & \idtt{c}   & \idtt{abc\$} \\
  3   & 1   & 1   & 7  & \idtt{\$}  & \idtt{abcbbcbcabc\$} \\
  4   & 4   & 11  & 8  & \idtt{c}   & \idtt{bbcbcabc\$} \\
  5   & 10  & 2   & 9  & \idtt{a}   & \idtt{bc\$} \\
  6   & 7   & 12  & 10 & \idtt{c}   & \idtt{bcabc\$} \\
  7   & 2   & 3   & 11 & \idtt{a}   & \idtt{bcbbcbcabc\$} \\
  8   & 5   & 4   & 12 & \idtt{b}   & \idtt{bcbcabc\$} \\  
  9   & 11  & 5   & 1  & \idtt{b}   & \idtt{c\$} \\
  10  & 8   & 6   & 2  & \idtt{b}   & \idtt{cabc\$} \\
  11  & 3   & 7   & 4  & \idtt{b}   & \idtt{cbbcbcabc\$} \\
  12  & 6   & 8   & 6  & \idtt{b}   & \idtt{cbcabc\$} \\
  \hline
  \end{tabular}
  \caption{An example of $\sufa[i]$, $\LF(i)$, $\FL(i)$ and $\bwt[i]$ for $T = \idtt{abcbbcbcabc\$}$.}
  \label{table:arrays}
\end{table}

\begin{figure}[t]
  \centering
  \begin{minipage}[b]{0.63\linewidth}
    \centering
    \includegraphics[scale=0.34]{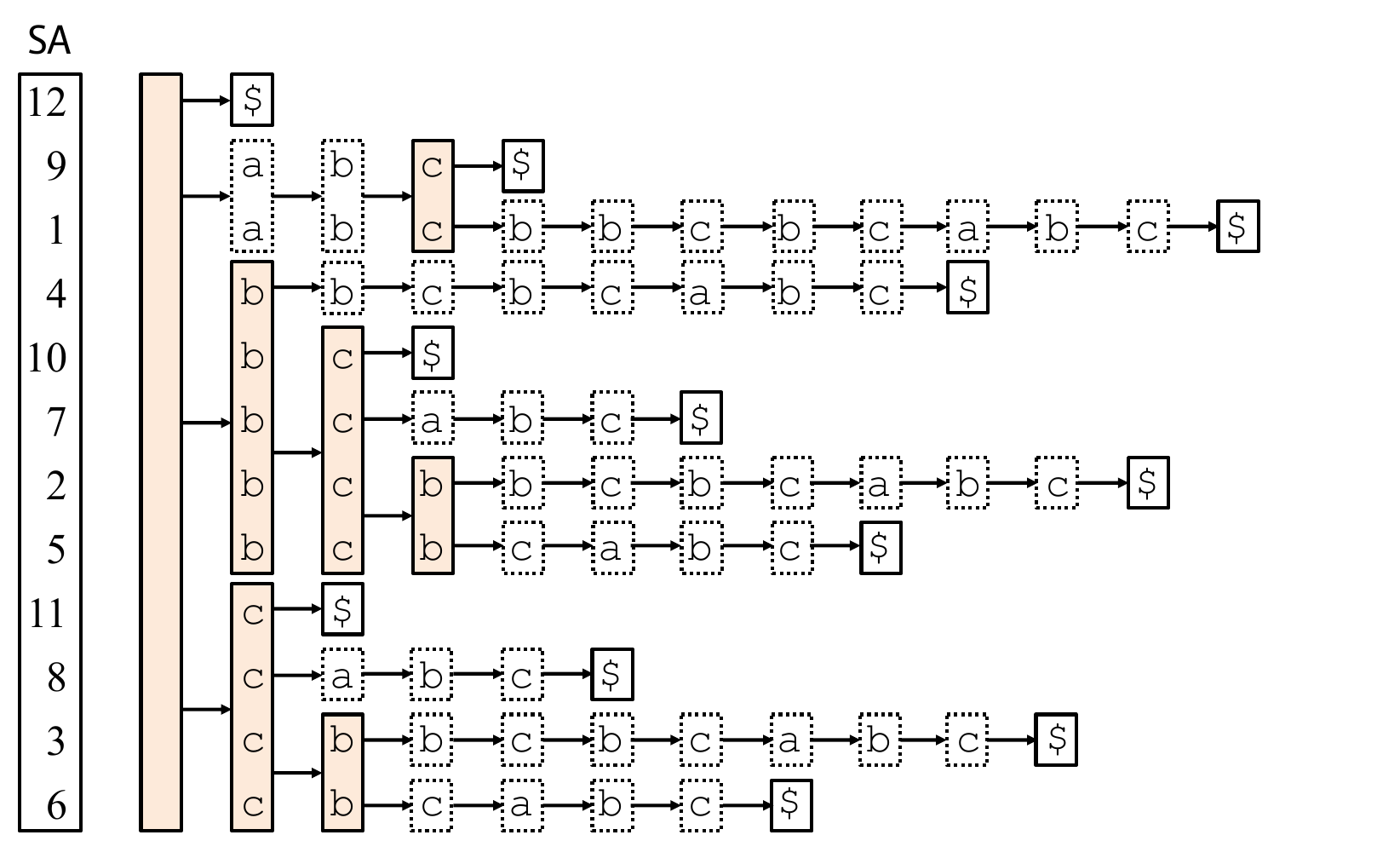}
  \end{minipage}
  \begin{minipage}[b]{0.35\linewidth}
    \centering
    \includegraphics[scale=0.34]{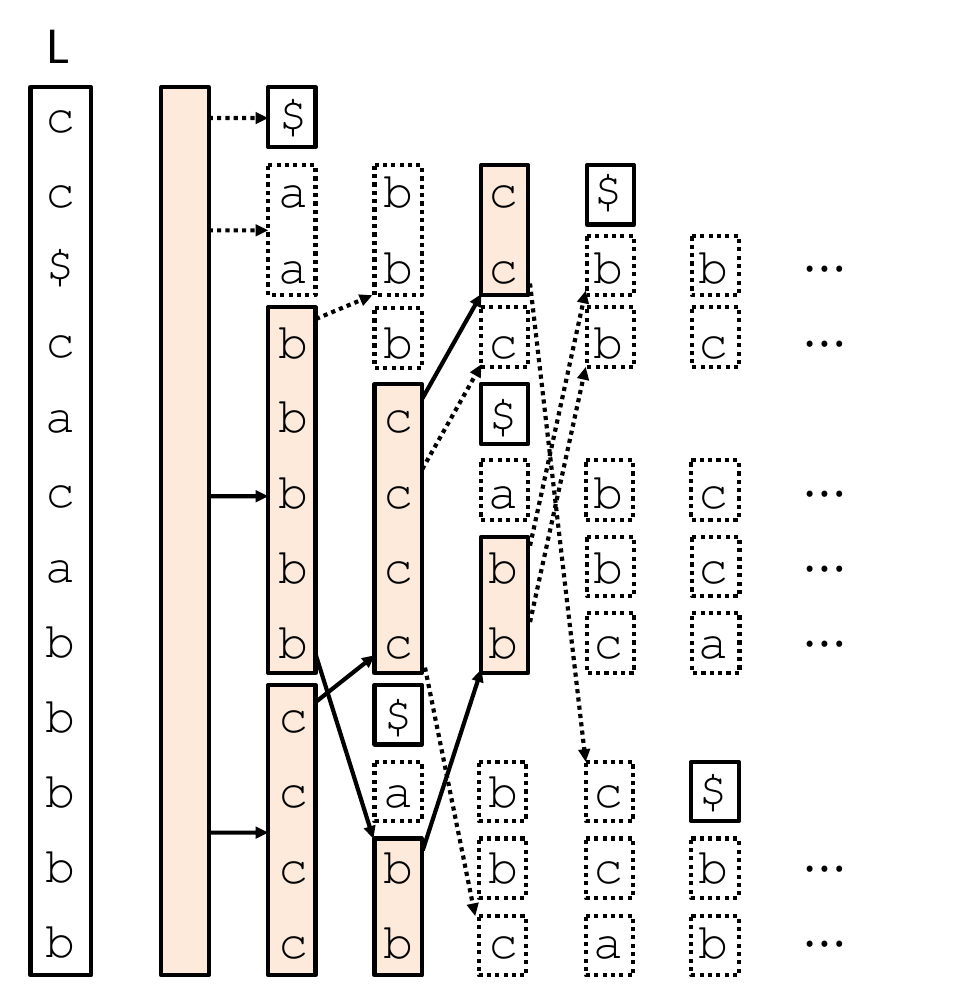}
  \end{minipage}
  \caption{
    The left figure shows the suffix tree of $T = \idtt{abcbbcbcabc\$}$ illustrated over sorted suffixes, on which each node $x$ can be represented by $\sufi(x)$ and $|x|$.
    A solid box is a node (highlighted for internal nodes), and a dotted box is an implicit node.
    The right figure shows the Weiner links outgoing from all internal nodes (the Weiner links from leaves are omitted). 
    The Weiner links that points to internal nodes are depicted with solid arrows, and the other ones with dotted arrows.
    Observe that there is a Weiner link from a node $x$ to $ax$ for any character $a \in \lcontext(x) = \{ \bwt[i] \mid i \in \sufi(x) \}$, where the case with $a = \$$ is excluded unless $x = \emptystr$.
  }
  \label{fig:st}
\end{figure}

The \emph{run-length encoded BWT (RLBWT)} of $T$ represents $\bwt$ in $O(r)$ space by string $\runc[1..r] \in \Sigma_{\$}^{r}$ and array $\rund[1..r] \in [1..n]^{r}$ such that $\bwt = \runc[1]^{\rund[1]} \runc[2]^{\rund[2]} \dots \runc[r]^{\rund[r]}$ and $\runc[i] \neq \runc[j]$ for any $1 \le i < j \le r$.
Each component $\runc[i]^{\rund[i]}$ is called a \emph{run} of $\bwt$.
While $r \le n$ is obvious, inequality $\sigma \le r$ holds under the assumption that $T$ contains all characters in $\Sigma_{\$}$.

\begin{example}
  For $T = \idtt{abcbbcbcabc\$}$, its BWT $\bwt = \idtt{cc\$cacabbbbb}$ has $r = 7$ runs.
  Then, we have $\runc = \idtt{c\$cacab}$ and $\rund = [2, 1, 1, 1, 1, 1, 5]$.
\end{example}

A notable property of $\LF$ is that $\LF(i) = k_1 + k_2$ holds, where $k_1$ is the number of characters smaller than $\bwt[i]$ in $\bwt$ and $k_2$ is the number of occurrences of $\bwt[i]$ in $\bwt[1..i]$.
In particular, if $\bwt[p..q]$ consists of a single character, then $\LF(i) = \LF(p) + i - p$ for any $i \in [p..q]$, and $\FL(i') = \FL(p') + i' - p'$ for any $i' \in [p'..q']$ with $p' = \LF(p)$ and $q' = \LF(q)$.
Hence, LF-mapping can be implemented in $O(r)$ space by constructing the predecessor data structure for the set of beginning positions of runs of $\bwt$ and remembering the $\LF(\cdot)$ values for those positions (FL-mapping can be implemented similarly).
Using the predecessor data structure of~\cref{lem:pred}, we get:
\begin{lemma}\label{lem:lfpred}
  Given the RLBWT of $T$, we can construct an $O(r)$-size data structure in $O(r \log \log_{w}(n/r))$ time and $O(r)$ space to compute $\LF(i)$ and $\FL(i)$ in $O(\log \log_{w}(n/r))$ time for any $i \in [1..n]$.
\end{lemma}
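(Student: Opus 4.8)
The plan is to convert the two structural identities for $\LF$ and $\FL$ stated just before the lemma into $O(r)$-size tables indexed by runs, and to route an arbitrary query to its containing run via a single predecessor query of \Cref{lem:pred}.

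First I would fix the run boundaries. Let $s_j := 1 + \sum_{k < j} \rund[k]$ be the beginning position of the $j$-th run, so that $\bwt[s_j .. s_j + \rund[j]) = \runc[j]^{\rund[j]}$ and the intervals $[s_j .. s_j + \rund[j])$ partition $[1..n]$. I would store the key set $\{ s_1, \dots, s_r \} \subseteq [1..n]$ in the predecessor data structure of \Cref{lem:pred} (taking $m = r$ and $u = n$), together with the array of the $r$ values $\LF(s_1), \dots, \LF(s_r)$. Using the characterization $\LF(i) = k_1 + k_2$, the value $\LF(s_j)$ is the number of characters in $\bwt$ strictly smaller than $\runc[j]$, plus the number of occurrences of $\runc[j]$ in $\bwt[1..s_j]$. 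Both terms are computable for all runs in $O(r)$ total time by a single scan over $\runc$ and $\rund$: accumulating $\sum_{k : \runc[k] = c} \rund[k]$ per character $c$ and taking a prefix sum over $\Sigma_{\$}$ gives the first term (note $\sigma \le r$), while a running per-character counter gives the second. Then for any $i \in [1..n]$, one predecessor query returns the largest $s_j \le i$, and $\LF(i) = \LF(s_j) + (i - s_j)$ by the single-character property of the run $[s_j .. s_j + \rund[j])$.

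For $\FL$ I would proceed symmetrically. Since $\LF$ is a bijection sending each run onto a contiguous interval, the $r$ images $[\,\LF(s_j) .. \LF(s_j) + \rund[j]\,)$ again partition $[1..n]$. I would therefore build a second predecessor structure over the already-computed beginnings $\{ \LF(s_1), \dots, \LF(s_r) \}$ and store the associated $\FL$-values $s_1, \dots, s_r$. For any $i' \in [1..n]$, a predecessor query returns the image interval with beginning $p' = \LF(s_j)$, and $\FL(i') = s_j + (i' - p')$ by the corresponding single-character property for $\FL$.

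Finally I would tally resources. Each predecessor structure holds $r$ keys drawn from $[1..n]$, so by \Cref{lem:pred} their construction costs $O(r \log \log_{w}(n/r))$ time and $O(r)$ space, dominating the $O(r)$-time precomputation of the $\LF(s_j)$ values; the stored value arrays add only $O(r)$ space. Each query performs one predecessor query in $O(\log \log_{w}(n/r))$ time followed by $O(1)$ arithmetic, matching the claimed bound. I do not expect a genuine obstacle, since the two identities are already supplied; the only point needing care is keeping the precomputation of the $r$ run images within the $O(r \log \log_{w}(n/r))$ budget, which the single-pass prefix-sum computation over the alphabet (using $\sigma \le r$) achieves in $O(r)$ time.
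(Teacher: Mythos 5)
Your proposal is correct and matches the paper's own construction: the paper sketches exactly this argument in the paragraph preceding the lemma, namely a predecessor structure of \Cref{lem:pred} over the run beginnings with the $\LF(\cdot)$ values stored at those positions, using the within-run linearity $\LF(i) = \LF(p) + i - p$, and a symmetric structure for $\FL$. Your added detail on computing all $\LF(s_j)$ in $O(r)$ time via per-character prefix sums (using $\sigma \le r$) is a correct filling-in of what the paper leaves implicit.
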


Nishimoto and Tabei~\cite{2021NishimotoT_OptimTimeQueriesOnBwt_ICALP} proposed a novel data structure, called the \emph{move data structure}, to implement LF-mapping by a simple linear search on a list of intervals without predecessor queries.
We consider a list $[p_1..p_2), [p_2..p_3), \dots, [p_{\hat{r}}..p_{\hat{r}+1})$ of $\hat{r} = O(r)$ intervals such that $p_1 = 1$, $p_{\hat{r}+1} = n+1$, and for any $k \in [1..\hat{r}]$, $\bwt[p_k..p_{k+1})$ consists of a single character and $[\LF(p_k)..\LF(p_{k+1}-1)]$ intersects with $O(1)$ intervals in the list.
We let each $[p_k..p_{k+1})$ have a pointer to the interval $[p_{k'}..p_{k'+1})$ that contains $\LF(p_k)$ and offset $o_k = \LF(p_{k}) - p_{k'}$.
Then, $\LF(i) = \LF(p_{k}) + i - p_k = p_{k'} + o_k + i - p_k$ holds for any $i \in [p_k..p_{k+1})$.
Given $i$ and interval $[p_k..p_{k+1})$ that contains $i$, we can compute $\LF(i)$ and find the interval that contains $\LF(i)$ by assessing $O(1)$ intervals from $[p_{k'}..p_{k'+1})$ to the right in the list.
Note that $\LF(i)$ computation on the move data structure works along with the pointer to the interval in the list that contains $i$, which is handled implicitly hereafter.
Under the assumption that $\sigma \le r$ in this paper, we can construct the data structure needed for LF-mapping (and FL-mapping) in $O(r \log r)$ time~\cite[Appendix D]{2020NishimotoT_FasterQueriesOnBwtRuns_X3}:
\begin{lemma}\label{lem:lfmove}
  Given the RLBWT of $T$, we can construct an $O(r)$-size data structure in $O(r \log r)$ time and $O(r)$ space to compute $\LF(i)$ and $\FL(i)$ in $O(1)$ time for any $i \in [1..n]$.
\end{lemma}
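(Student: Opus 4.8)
The plan is to realize the move data structure of Nishimoto and Tabei directly on top of the run structure of the RLBWT, so that I only need to (i) build a partition of $[1..n]$ into $\hat{r}=O(r)$ single-character $\bwt$-intervals satisfying the stated balancing property, (ii) attach to each interval its target pointer and offset $o_k$, and (iii) bound the construction time by $O(r \log r)$. The correctness of the query then follows immediately from the formula $\LF(i)=p_{k'}+o_k+(i-p_k)$ already derived in the text.

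First I would take the $r$ run boundaries as the initial partition $[p_1..p_2),\dots,[p_r..p_{r+1})$; each block is over a single character, so $\LF$ is affine on it and maps it bijectively onto a contiguous interval $[\LF(p_k)..\LF(p_{k+1}-1)]$ of the same length, and these image intervals again partition $[1..n]$. Thus the requirement that $[\LF(p_k)..\LF(p_{k+1}-1)]$ intersects $O(1)$ intervals is equivalent to each image interval containing $O(1)$ of the partition's boundary positions. Since the starting partition need not have this property, the core step is a splitting procedure: fix a constant $d \ge 4$ and, whenever some image interval contains more than $d$ boundaries, insert one or more boundary positions into the input partition (splitting the offending source block) to reduce the overflow, recording along the way the target interval and offset of each new block.

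The hard part will be proving that the splitting terminates with $\hat{r}=O(r)$ intervals. The difficulty is that inserting a boundary to repair one interval creates a new boundary in the output partition, which lands inside some other image interval and may itself cause an overflow, so the repairs can cascade. I would handle this with the amortized/potential argument of Nishimoto and Tabei: with $d$ chosen large enough, each split strictly decreases a global potential measuring the total overflow across all image intervals, and hence the number of splits — and the number of added boundaries — is $O(r)$. Given the resulting balanced partition, the $O(1)$-intersection property guarantees that, starting from the pointed interval $[p_{k'}..p_{k'+1})$ and scanning rightward, one reaches the interval containing $\LF(i)$ in $O(1)$ steps, yielding constant-time queries.

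For the running time, each of the $O(r)$ blocks processed during splitting requires locating the image interval containing a given position, which is a predecessor query over the current $O(r)$ boundary set; using a balanced search tree (or sorting the $O(r)$ relevant positions once) this costs $O(\log r)$ per operation, for $O(r \log r)$ total, matching~\cite[Appendix D]{2020NishimotoT_FasterQueriesOnBwtRuns_X3}. Finally, $\FL$ is obtained symmetrically by running the identical construction on the inverse permutation, whose decomposition into increasing runs also has $O(r)$ parts because it is induced by the same RLBWT; this yields the $\FL$ move structure within the same bounds.
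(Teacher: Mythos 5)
Your proposal is correct and follows essentially the same route as the paper, which gives no proof of its own here and simply cites the move-data-structure construction of Nishimoto and Tabei; your outline (start from the run partition, split blocks whose $\LF$-images meet too many intervals, bound the number of splits by the amortized potential argument, and pay $O(\log r)$ per predecessor lookup during construction) is exactly that cited construction. The only minor point to make explicit is that the $O(r)$ size of the initial partition on the $\FL$ side also relies on the paper's standing assumption $\sigma \le r$.
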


\section{Upper bound $2r$ of net occurrences and MUSs}\label{sec:bound}
In this section, we consider combinatorial properties of net occurrences on BWTs.

We start with the following lemma.
\begin{lemma}\label{lem:nocc_iff}
  For any integer $i \in [p..q] = \sufi(x)$ for a repeat $x$ of $T$,
  $\sufa[i]$ is a net occurrence of $x$ if and only if $\bwt[i]$ is unique in $\bwt[p..q]$ 
  and $T[\sufa[i]..\sufa[i]+|x|]$ is unique in $T$.
\end{lemma}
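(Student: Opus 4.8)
The plan is to reduce both directions of the equivalence to a single clean combinatorial statement about the one-character extensions of $x$ at the occurrence in question. Write $b := \sufa[i]$, and let $a := T[b-1]$ and $c := T[b+|x|]$ be the left and right context characters of this occurrence, so that $ax = T[b-1..b+|x|)$ and $xc = T[b..b+|x|]$. First I would restate the two hypotheses in terms of these extensions. Using the identity $\lcontext(x) = \{\bwt[j] \mid j \in [p..q]\}$, refined to $|\occ(ax)| = |\{j \in [p..q] \mid \bwt[j] = a\}|$ (for $a \neq \$$), the condition ``$\bwt[i]$ is unique in $\bwt[p..q]$'' says exactly that $a$ occurs once in $\bwt[p..q]$, i.e. $ax$ has a single occurrence, i.e. $ax$ is \emph{not} a repeat. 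Symmetrically, since $xc = T[\sufa[i]..\sufa[i]+|x|]$ manifestly occurs at $b$, the condition ``$T[\sufa[i]..\sufa[i]+|x|]$ is unique'' is literally ``$xc$ is not a repeat''. Thus it suffices to prove
\[
  \sufa[i] \text{ is a net occurrence of } x \iff \text{neither } ax \text{ nor } xc \text{ is a repeat.}
\]

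For the ($\Leftarrow$) direction I would argue by contraposition and simply exhibit a covering repeat. If $ax$ is a repeat, then $x' := ax$ is a repeat occurring at $b' := b-1$ with $[b..b+|x|) \subsetneq [b-1..b+|x|) = [b'..b'+|x'|)$, so $b$ is covered and hence not a net occurrence; note $|x'| = |x|+1 > |x|$ makes $x'$ genuinely ``another'' repeat. If instead $xc$ is a repeat, the symmetric choice $x' := xc$, $b' := b$ gives $[b..b+|x|) \subsetneq [b..b+|x|+1)$, again covering $b$.

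For the ($\Rightarrow$) direction, suppose $b$ is not a net occurrence, witnessed by a repeat $x'$ occurring at $b'$ with $[b..b+|x|) \subsetneq [b'..b'+|x'|)$. Proper containment forces $b' < b$ or $b'+|x'| > b+|x|$. In the first case $b-1 \in [b'..b'+|x'|)$, so $[b-1..b+|x|) \subseteq [b'..b'+|x'|)$ and $ax$ appears as a substring of $x'$ at a fixed offset; since $x'$ has at least two occurrences, shifting each by that offset yields at least two occurrences of $ax$, so $ax$ is a repeat. In the second case, symmetrically $b+|x| \in [b'..b'+|x'|)$, so $xc$ is a substring of the repeat $x'$ and is therefore itself a repeat. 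Combining the two directions with the restatement above completes the argument.

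The genuinely delicate parts are the interval bookkeeping in the ($\Rightarrow$) direction (the off-by-one checks showing the relevant length-$(|x|+1)$ window sits inside $[b'..b'+|x'|)$, and the elementary fact that a substring of a repeat is a repeat) and the sentinel/degenerate edge cases. When $a = \$$ (which forces $b = 1$) the string $ax$ has no occurrence in $[1..n]$ and so is not a repeat, while the lone sentinel makes $\bwt[i]$ unique in $\bwt[p..q]$, so both sides agree. When the occurrence abuts the right sentinel, $c = \$$ and $xc$ is a unique suffix, so the right condition holds automatically; and because a nonempty repeat $x$ lies in $\Sigma^{*}$ we have $b+|x| \le n$, guaranteeing $c$ is well defined. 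Finally, the case $x = \emptystr$ is governed by the special net-occurrence convention of \Cref{sec:csr}; here one checks that $\bwt$ and $T$ share the same character multiplicities, so ``$\bwt[i]$ unique in $\bwt[1..n]$'' means ``$T[b-1]$ is a unique character'', matching the convention that $b$ is net iff $T[b-1]$ and $T[b]$ are both unique.
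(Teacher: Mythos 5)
Your proof is correct and follows essentially the same route as the paper's: both reduce the statement to the observation that the two uniqueness conditions say exactly that neither $ax$ nor $xc$ is a repeat, and that these are precisely the one-character extensions a covering repeat must induce. Your write-up additionally spells out the step the paper leaves implicit (that any covering repeat $x'$ at $b'$ with $b'<b$ or $b'+|x'|>b+|x|$ forces $ax$ or $xc$ to be a repeat) and the sentinel and empty-string edge cases, which is a faithful elaboration rather than a different argument.
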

\begin{proof}
  If $a = \bwt[i]$ is not unique in $\bwt[p..q]$, then $\sufa[i]$ is not a net occurrence of $x$ because $ax = T[\sufa[i]-1..\sufa[i]+|x|)$ is a repeat that covers $[\sufa[i]..\sufa[i]+|x|)$.
  Similarly, if $T[\sufa[i]..\sufa[i]+|x|]$ is not unique in $T$, then $\sufa[i]$ is not a net occurrence of $x$ because $T[\sufa[i]..\sufa[i]+|x|]$ is a repeat that covers $[\sufa[i]..\sufa[i]+|x|)$.

  On the other hand, if $\bwt[i]$ is unique in $\bwt[p..q]$ and $T[\sufa[i]..\sufa[i]+|x|]$ is unique in $T$, 
  there is no repeat that covers $[\sufa[i]..\sufa[i]+|x|)$, and hence, $\sufa[i]$ is a net occurrence of $x$.
\end{proof}

\begin{example}
  Let us consider a repeat $\idtt{bc}$ with $\sufi(\idtt{bc}) = [5..8]$ in the example of \cref{table:arrays}.
  $\sufa[6] = 7$ is a net occurrence of $\idtt{bc}$ because $\bwt[6] = \idtt{c}$ is unique in $\bwt[5..8] = \idtt{acab}$, and $T[\sufa[6]..\sufa[6]+2] = T[7..9] = \idtt{bca}$ is unique in $T$.
  $\sufa[5] = 10$ and $\sufa[7] = 2$ are not net occurrences of $\idtt{bc}$ because $\bwt[5] = \bwt[7] = \idtt{a}$ is not a unique character in $\bwt[5..8]$.
  $\sufa[7] = 2$ and $\sufa[8] = 5$ are not net occurrences of $\idtt{bc}$ because $T[2..4] = T[5..7] = \idtt{bcb}$ is a repeat.
\end{example}

The following lemma was implicitly proved in~\cite[Theorem~24]{2024GuoEWZ_ExploitNewProperOfStrin_CPM}
to show the total length of all net occurrences is bounded by twice the sum of irreducible longest common prefixes.
\begin{lemma}\label{lem:nocc_rlbwt}
  If $\sufa[i]$ is a net occurrence of a repeat $x$ of $T$, then $i$ is at the beginning or ending position of a run in $\bwt$, 
  and $\sufa[i]$ cannot be a net occurrence of another repeat.
\end{lemma}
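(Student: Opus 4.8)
The plan is to derive both conclusions from the characterization of net occurrences in \Cref{lem:nocc_iff}, which says that $\sufa[i]$ being a net occurrence of $x$ (with $\sufi(x) = [p..q]$) is equivalent to two uniqueness facts holding simultaneously: the character $a := \bwt[i]$ occurs exactly once in $\bwt[p..q]$, and the one-character right extension $T[\sufa[i]..\sufa[i]+|x|]$ occurs exactly once in $T$. The first fact will drive the run-boundary claim, and the second will drive the ``no other repeat'' claim; neither half requires any machinery beyond \Cref{lem:nocc_iff} and elementary counting of occurrences.

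For the run-boundary claim I would argue by contraposition. Suppose $i$ is strictly interior to a run, so that $1 < i < n$ and $\bwt[i-1] = \bwt[i] = \bwt[i+1] = a$. Since $x$ is a repeat, its SA-interval is nontrivial, i.e.\ $p < q$, so $i$ cannot equal both $p$ and $q$; consequently at least one of $i-1$ and $i+1$ still lies in $[p..q]$. That in-window neighbor also carries the character $a$, contradicting the uniqueness of $a$ in $\bwt[p..q]$ granted by \Cref{lem:nocc_iff}. Hence $i$ is the first or last position of its run. The three situations $i=p$, $i=q$, and $p<i<q$ are all covered by this single observation, though it is worth spelling them out to make clear that an in-window neighbor always exists.

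For the second claim, write $b := \sufa[i]$ and suppose for contradiction that $b$ is also a net occurrence of some repeat $x' \neq x$. Both $x$ and $x'$ occur at $b$, hence both are prefixes of $T[b..]$; being distinct, they have different lengths, so let $y$ be the shorter and $z$ the longer, giving $|y| < |z|$ and $y$ a proper prefix of $z$. By \Cref{lem:nocc_iff} applied to $y$, the extension $T[b..b+|y|]$ is unique in $T$. But $|z| \ge |y|+1$ means $z = T[b..b+|z|)$ has $T[b..b+|y|]$ as a prefix, so each of the (at least two) occurrences of the repeat $z$ contains an occurrence of $T[b..b+|y|]$, forcing $T[b..b+|y|]$ to occur at least twice and contradicting its uniqueness. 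Thus no second repeat can have $b$ as a net occurrence.

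The only delicate point is the boundary bookkeeping in the first claim: the uniqueness hypothesis is stated for the window $[p..q]$, so when $i$ sits at an endpoint of the SA-interval, one of the neighbors $i\pm1$ leaves the window and is useless. The observation that rescues the argument is that repeathood forces $p<q$, guaranteeing an in-window neighbor on the opposite side; this is exactly where the assumption that $x$ is a repeat (rather than merely a substring) is consumed. Everything else reduces to \Cref{lem:nocc_iff} together with the elementary fact that a repeat cannot contain a unique substring.
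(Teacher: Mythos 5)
Your proof is correct and follows essentially the same route as the paper: the run-boundary claim comes from the BWT-character uniqueness condition of \Cref{lem:nocc_iff} together with the observation that $p<q$ guarantees an in-window neighbour, and the second claim from the fact that the shorter of two repeats sharing the occurrence is subsumed by the longer. The only (cosmetic) difference is that for the second claim the paper argues directly from the definition of net occurrence --- the shorter repeat's occurrence is covered by the longer repeat at the same position --- whereas you detour through the uniqueness of the right extension $T[b..b+|y|]$ from \Cref{lem:nocc_iff}; both are valid one-step arguments.
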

\begin{proof}
  Let $[p..q] = \sufi(x)$.
  If $i \in [p..q]$ is not at the run's boundaries of $\bwt$, then at least one of $\bwt[i] = \bwt[i+1]$ with $i+1 \in [p..q]$ or $\bwt[i-1] = \bwt[i]$ with $i-1 \in [p..q]$ holds, which means that $\sufa[i]$ is not a net occurrence of $x$ by \cref{lem:nocc_iff}.
  $\sufa[i]$ cannot be a net occurrence of two distinct repeats, since otherwise, the occurrence of the shorter repeat is covered by the longer one.
\end{proof}

\cref{lem:nocc_rlbwt} leads to the following theorem:
\begin{theorem}\label{theo:nocc_2r}
  The total number of net occurrences of $T$ is less than $2r$.
\end{theorem}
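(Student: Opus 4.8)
The plan is to turn the statement into a counting argument over the run boundaries of $\bwt$, feeding both parts of \Cref{lem:nocc_rlbwt} into it. First I would reduce the quantity $\sum_{x \in \rep} \nfreq(x)$, i.e.\ the total number of net occurrences, to a count of suffix-array indices. By the second part of \Cref{lem:nocc_rlbwt}, a text position $\sufa[i]$ is a net occurrence of at most one repeat, so every net occurrence is uniquely identified by its text position; since $\sufa$ is a bijection on $[1..n]$, the total number of net occurrences equals the number of indices $i \in [1..n]$ for which $\sufa[i]$ is a net occurrence of some repeat. This collapses the sum over all repeats into a single count of ``witnessing'' indices.

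Second, I would bound the number of witnessing indices using the first part of \Cref{lem:nocc_rlbwt}: every such index $i$ lies at a run boundary, i.e.\ is the beginning or the ending position of a run of $\bwt$. As there are $r$ runs, there are exactly $r$ beginning positions and $r$ ending positions, so the set of run-boundary indices has size at most $2r$. This already yields the weak bound $2r$, and the reduction of the previous step makes it an upper bound on the total net occurrences.

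The only delicate step, and hence the main obstacle, is sharpening $2r$ to a strict inequality. The idea is that a run of length one has its beginning and ending positions coincide, so it contributes just one distinct boundary index rather than two. I would then observe that such a length-one run always exists: because $\$$ occurs exactly once in $T[1..n]$, the symbol $\$$ appears exactly once in $\bwt$ (at the single index $i$ with $\sufa[i] = 1$), forming a run of length one. Consequently, writing $B$ and $E$ for the sets of run-beginning and run-ending indices, we have $|B| = |E| = r$ and $|B \cap E| \ge 1$, so the number of distinct run-boundary indices is $|B \cup E| = 2r - |B \cap E| \le 2r - 1$. Combined with the reduction above, the total number of net occurrences is at most $2r - 1$, which is strictly less than $2r$, as claimed.
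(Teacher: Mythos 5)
Your proof is correct, and its first two steps (collapsing the sum over repeats into a count of distinct witnessing indices via the second part of \Cref{lem:nocc_rlbwt}, then bounding those indices by run boundaries via the first part) coincide with the paper's argument. Where you diverge is the sharpening to a strict inequality. The paper writes the boundary count as $r_1 + 2r_{>1}$ (runs of length one versus longer runs) and, to handle the case where this equals $2r$, adds the extra observation that $\sufa[1]$ (resp.\ $\sufa[n]$) cannot be a net occurrence when the first (resp.\ last) run is longer than one, so those runs are each charged at most once. You instead note that the sentinel $\$$ occurs exactly once in $\bwt$ (precisely at the index $i$ with $\sufa[i]=1$, since $\$$ appears only at $T[0]$ and $T[n]$), forcing at least one run of length one and hence $|B \cup E| \le 2r-1$. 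Your route is slightly cleaner: it needs no further reasoning about which boundary positions can actually host net occurrences, only the standing sentinel assumption. The trade-off is that the paper's version of the final step does not lean on the sentinel and would survive in a setting where every character of $T$ repeats, whereas yours is tied to the guarantee that some BWT run has length one. Within the paper's stated conventions, both arguments are sound and give the same bound $2r-1$.
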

\begin{proof}
  Let $r_1$ and respectively $r_{>1}$ be the number of runs in $\bwt$ of length $1$ and longer than $1$, i.e., $r = r_1 + r_{>1}$.
  Since there are $r_1 + 2r_{>1}$ positions that correspond to beginning or ending positions of runs in $\bwt$, it is clear from \cref{lem:nocc_rlbwt} that the total number of net occurrences is at most $r_1 + 2r_{>1} \le 2r$.
  The number is shown to be less than $2r$ by looking closer at some special positions like $1$ and $n$ on $\bwt$:
  Since $\sufa[1]$ (resp. $\sufa[n]$) cannot be a net occurrence if the first (resp. last) run of $\bwt$ is longer than $1$, only one net occurrence is charged to the first (resp. last) run of $\bwt$.
\end{proof}

A non-empty interval $[b..e]$ is called a \emph{minimal unique substring (MUS)} if $T[b..e]$ is unique while $T[b+1..e]$ and $T[b..e-1]$ are repeats.
In~\cite{2011IlieS_MinimUniqueSubstAndMaxim,2025MienoI_SpaceEfficOnlinComputOf_CPM}, it has been proved that the net occurrences and MUSs are dual concepts, basically saying that
for any two consecutive MUSs $[b..e]$ and $[b'..e']$ with $b < b'$, $b+1$ is a net occurrence of $T[b+1..e'-1]$.
With this duality, we immediately get the following:
\begin{corollary}\label{coro:mus_2r}
  The number of MUSs of $T$ is less than $2r$.
\end{corollary}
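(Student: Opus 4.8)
The plan is to transport the bound of \Cref{theo:nocc_2r} across the MUS/net-occurrence duality recalled just above the corollary. Concretely, I would build an injection from the set of MUSs into the set of net occurrences, and then simply invoke \Cref{theo:nocc_2r} to cap the size of the target. The only real content is making the injection account for \emph{all} MUSs (not just all-but-one), so that the strict bound $<2r$ survives rather than degrading to $\le 2r$.

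First I would list the MUSs as $[b_1..e_1], [b_2..e_2], \dots, [b_m..e_m]$ sorted by left endpoint. Because any superstring of a unique string is again unique, no MUS can be contained in another; hence the $b_i$ are strictly increasing and so are the $e_i$. The stated duality assigns to each consecutive pair $[b_i..e_i], [b_{i+1}..e_{i+1}]$ the net occurrence at position $b_i+1$ of the repeat $T[b_i+1..e_{i+1}-1]$. Since the $b_i$ are pairwise distinct, the positions $b_1+1, \dots, b_{m-1}+1$ are pairwise distinct, and by \Cref{lem:nocc_rlbwt} a position can witness at most one repeat; so these are $m-1$ genuinely distinct net occurrences, all lying at positions $\ge 2$.

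To recover the missing $+1$ I would charge one extra net occurrence to the left sentinel. Let $T[1..\ell]$ be the shortest unique prefix of $T$. If $\ell=1$ then $T[1]$ is unique and, since $T[0]=\$$ is unique, position $1$ is a net occurrence of $\emptystr$ by the sentinel exception; if $\ell\ge 2$ then $T[1..\ell-1]$ is a repeat whose occurrence at position $1$ has the unique left character $T[0]=\$$ and the unique right extension $T[1..\ell]$, so no repeat can cover $[1..\ell-1]$ and position $1$ is a net occurrence by \Cref{lem:nocc_iff}. Either way position $1$ is a net occurrence, and it cannot coincide with any $b_i+1\ge 2$, so it is disjoint from the image of the pair map. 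Thus the number of net occurrences is at least $(m-1)+1=m$, and \Cref{theo:nocc_2r} gives $m \le (\text{number of net occurrences}) < 2r$, i.e.\ fewer than $2r$ MUSs.

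The hard part is exactly this boundary bookkeeping: the consecutive-pair form of the duality only produces the $m-1$ \emph{internal} net occurrences, so the strict inequality hinges entirely on correctly charging the sentinel-adjacent net occurrence at position $1$ (and on checking that the trailing MUS $[n..n]=\$$ is consumed as the right end $[b_{m-1}..e_{m-1}],[b_m..e_m]$ of the last pair rather than spawning a spurious net occurrence). Upgrading the ``$\ge m-1$'' guaranteed by the quoted duality to ``$\ge m$'' — equivalently, verifying that the correspondence is an exact bijection rather than an injection off by one — is the single step where care is needed; once it is in place, the bound is immediate from \Cref{theo:nocc_2r}.
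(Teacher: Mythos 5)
Your proposal is correct and follows the same route as the paper: the paper's own proof is a one-line appeal to the MUS/net-occurrence duality combined with \Cref{theo:nocc_2r}. The extra boundary bookkeeping you supply --- noting that the consecutive-pair form of the duality yields only $m-1$ net occurrences from $m$ MUSs, and recovering the missing one at position $1$ via the sentinel --- is precisely the detail the paper leaves implicit, and it is genuinely needed to obtain the strict bound $<2r$ rather than merely $\le 2r$.
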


\section{R-enum revisited}\label{sec:renum}

\subsection{Rough sketch of r-enum and speedup}
Conceptually, r-enum performs a breadth-first traversal on the internal nodes $\rmr$ of the suffix tree by following Weiner links using RLBWT-based data structures.
Each node $x \in \rmr$ is visited with its suffix interval $\sufi(x)$ and length $|x|$ together with the information about its right extensions $\rext(x)$, which is the list of the set $\{ (c, \sufi(xc)) \mid c \in \rcontext(x) \}$ sorted by the character $c$.
The triplet $\repr(x) = (\sufi(x), \rext(x), |x|)$ is called the \emph{rich representation}~\cite{2020BelazzouguiCKM_LinearTimeStrinIndexAnd,2021NishimotoT_REnumEnumerOfCharac_CPM} of $x$, which can be stored in $O(|\rcontext(x)|)$ space.

Given $\repr(x) = ([p..q], \{(c_i, [p_i..q_i])\}_{i = 1}^{k}, |x|)$, we can compute $\{ \repr(ax) \mid ax \in \substr, a \in \Sigma \}$ as follows:
A null-initialized array $A[1..\sigma]$ is used as a working space to build $\rext(ax)$ at $A[a]$ for $a \in \lcontext(x) \setminus \{ \$ \}$.
For every $i \in [1..k]$, we compute $\lcontext(xc_i)$ by a range distinct query on $\bwt[p_i..q_i]$, and for each $a \in \lcontext(xc_i) \setminus \{ \$ \}$ we compute $\sufi(axc_i)$ by LF-mapping and append $(c_i, \sufi(axc_i))$ to the list being built at $A[a]$.
After processing all $i \in [1..k]$, $A[a]$ becomes $\rext(ax)$.
Finally, we enumerate $\lcontext(x)$ by a range distinct query on $\bwt[p..q]$ to go through only the used entries of $A$ to compute $\repr(ax)$ and clear the entries.
For the next round of the breadth-first traversal on $\rmr$, we keep only the set $\{ \repr(ax) \mid ax \in \rmr \}$, discarding $\repr(ax)$ if $\rext(ax)$ contains only one element.

\begin{example}
  In the r-enum for $T = \idtt{abcbbcbcabc\$}$, we visit a repeat $\idtt{bc} \in \rmr$ with its rich representation $\repr(\idtt{bc}) = (\sufi(\idtt{bc}), \rext(\idtt{bc}), |\idtt{bc}|) = ([5..8], \{ (\$, [5..5]), (\idtt{a}, [6..6]), (\idtt{b}, [7..8]) \}, 2)$.
  With a null-initialized array $A[1..\sigma]$, we compute $\repr(\idtt{abc}) = ([2..3], \{ (\$, [2..2]), (\idtt{b}, [3..3]) \}, 3)$ at $A[\idtt{a}]$, $\repr(\idtt{bbc}) = ([4..4], \{ (\idtt{b}, [4..4]) \}, 3)$ at $A[\idtt{b}]$, and $\repr(\idtt{cbc}) = ([12..12], \{ (\idtt{a}, [12..12]) \}, 3)$ at $A[\idtt{c}]$.
\end{example}

R-enum executes the above procedure in $O(r)$ space using range distinct queries on $\runc[1..r]$ and RLBWT-based LF-mapping.
The breadth-first traversal on $\rmr$ can be made in $O(r)$ space because the space needed to store the rich representations for all strings in $\{ ax \in \substr \mid a \in \Sigma, x \in \rmr, |x| = t \}$ for any $t \in [0..n]$ was proved to be $O(r)$ in Section~3.3 of~\cite{2021NishimotoT_REnumEnumerOfCharac_CPM}.
The total number of outputs of range distinct queries is upper bounded by $O(n)$ because each output can be charged to a distinct Weiner link.
Similarly, the total number of LF-mapping needed is bounded by $O(n)$.

The original r-enum used \cref{lem:lfpred} to implement LF-mapping and runs in $O(n \log \log_{w} (n/r))$ time in total (excluding the time to output patterns for MAWs).
We show that this can be improved to $O(n)$ time.
\begin{theorem}\label{theo:r-enum}
  Given the RLBWT of $T$, r-enum runs in $O(r)$ working space and $O(n)$ time in addition to the time linear to the number of output strings.
\end{theorem}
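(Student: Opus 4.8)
The plan is to leave the correctness and the $O(r)$-working-space analysis of r-enum exactly as established in~\cite{2021NishimotoT_REnumEnumerOfCharac_CPM} and in the sketch above, and to re-examine only the running time, since the only change I make is in how LF-mapping is realized. As recalled above, the whole traversal is driven by $O(n)$ range distinct query outputs (each charged to a distinct Weiner link) and $O(n)$ LF-mappings, with $O(1)$ additional bookkeeping per such operation (appending a pair to a list in the working array, and reading/clearing used entries after each node). By \Cref{lem:rdq} the range distinct queries are answered in time linear in the number of outputs, so they contribute $O(n)$ in total. Hence the running time is $O(n)$ plus the total cost of the $O(n)$ LF-mappings plus the construction cost of the auxiliary data structures, and the goal reduces to implementing LF-mapping so that both of these terms are $O(n)$.

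First I would replace the predecessor-based LF-mapping of \Cref{lem:lfpred} by the move data structure of \Cref{lem:lfmove}, which answers LF in $O(1)$ time once the list interval containing its argument is known; this removes predecessor queries entirely but transfers the burden to maintaining the interval pointers. The idea is to carry along with each SA-interval $[p_i..q_i]$ stored in a rich representation the two list intervals containing $p_i$ and $q_i$. Via a precomputed $O(r)$-size map from list intervals to runs, these pointers locate in $O(1)$ the runs of $\bwt$ containing $p_i$ and $q_i$, which is exactly the run range on which the range distinct query over $\runc[1..r]$ must be issued. That query returns run indices, and the endpoints $p_a, q_a$ of each refined interval are run boundaries whose containing list intervals I read off in $O(1)$ from a precomputed map sending each run to its first and last list interval, unless clipping to $[p_i..q_i]$ forces $p_a = p_i$ or $q_a = q_i$, in which case the already-carried pointer is reused. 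Thus each of the $O(n)$ LF-mappings, together with its interval-pointer maintenance, costs $O(1)$, for a total LF cost of $O(n)$.

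It remains to bound the construction cost, and this is where the only quantitative subtlety lies, since \Cref{lem:lfmove} builds the move data structure in $O(r \log r)$ time, which is not $O(n)$ when $r$ is close to $n$. I would resolve this by a case analysis on $r$ against the threshold $n/\log n$. If $r \le n/\log n$, then $r \log r \le (n/\log n)\log n = O(n)$, so the move data structure is built within budget and the analysis above yields $O(n)$ time in $O(r)$ space. If $r > n/\log n$, then $n/r < \log n$, whence $\log_w(n/r) = \log(n/r)/\log w < \log\log n / \Theta(\log\log n) = O(1)$ because $w = \Theta(\log n)$, so that $\log\log_w(n/r) = O(1)$ as well; in this regime the \emph{original} predecessor-based implementation via \Cref{lem:lfpred} already answers each LF in $O(\log\log_w(n/r)) = O(1)$ time and is built in $O(r\log\log_w(n/r)) = O(r) = O(n)$ time, so r-enum runs in $O(n)$ time and $O(r)$ space here as well. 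Combining the two cases gives the claimed bounds.

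The main obstacle is precisely this construction-time gap: the $O(1)$-query move data structure is the natural tool for the speedup, but its $O(r\log r)$ build time forces the dichotomy above, where the two LF implementations happen to be fast in complementary ranges of $r$ (the move data structure for small $r$, the predecessor structure for large $r$). The secondary technical point is the $O(1)$-time maintenance of the move data structure's interval pointers directly from the run boundaries produced by the range distinct queries, avoiding any predecessor query in the small-$r$ case; this is handled by the precomputed run-to-interval maps described above.
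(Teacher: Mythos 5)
Your proposal is correct and follows essentially the same route as the paper: the same dichotomy at $r \approx n/\log n$, using the move data structure of \Cref{lem:lfmove} (built in $O(r\log r) = O(n)$ time) when $r$ is small and observing that the predecessor-based \Cref{lem:lfpred} already gives $O(1)$-time LF-mapping when $r = \Omega(n/\log n)$. Your additional discussion of maintaining the move data structure's interval pointers across range distinct queries is a valid elaboration of a point the paper declares to be handled implicitly.
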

\begin{proof}
  If $r = \Omega(n / \log n)$, the original r-enum with \cref{lem:lfpred} is already upper bounded by $O(n)$ because $O(n \log \log_{w} (n/r)) = O(n \log \log_{w} \log n)$ and $w = \Theta(\log n)$.
  For the case with $r = o(n / \log n)$, we use the move data structure of \cref{lem:lfmove} to implement LF-mapping, which can be constructed in $O(r \log r) = O(n)$ time.
  Since LF-mapping can be done in $O(1)$ time on the move data structure, the r-enum runs in $O(n)$ time.
\end{proof}

\subsection{Extension for NSMRs, SMRs, and context diversities}
In this subsection, we extend r-enum to compute the NSMRs and SMRs, and the context diversity for every RMR.
Since $\smr \subseteq \nsmr \subseteq \mr \subseteq \rmr$ holds, for enumerating SMRs (resp. NSMRs), it is enough to check every element in $\rmr$ if it is in $\smr$ (resp. $\nsmr$).

Recall that r-enum traverses $x \in \rmr$ while computing $\{ \repr(ax) \mid ax \in \substr \}$ from $\repr(x) = ([p..q], \{(c_i, [p_i..q_i])\}_{i = 1}^{k}, |x|)$.
First, we can compute the context diversity $\pqcd{|\lcontext(x)|}{|\rcontext(x)|}$ of $x \in \rmr$ because $|\rcontext(x)|$ is $k$ and $|\lcontext(x)|$ is the number of outputs of a range distinct query on $\bwt[p..q]$.
Since $x$ is an SMR if and only if $|\lcontext(x)| = |\rcontext(x)| = |\occ(x)| = q - p + 1$, the context diversity has enough information to choose SMRs from RMRs.
On the other hand, NSMRs cannot be determined from the context diversities.

To determine if $x \in \rmr$ is an NSMR or not, we check if there is a net occurrence of $x$ in $\{ \sufa[j] \mid j \in [p..q] \}$ using the array $A[1..\sigma]$ that stores $\rext(ax)$ at $A[a]$ for any $a \in \lcontext(x) \setminus \{ \$ \}$.
Thanks to \cref{lem:nocc_iff}, we only have to consider $\sufa[p_i]$ for intervals $[p_i..q_i]$ with $p_i = q_i$ as candidates of net occurrences.
For an interval $[p_i..q_i]$ with $p_i = q_i$, we check if $a = \bwt[p_i]$ is $\$$ or $\sufi(ax)$ computed in $A[a]$ is a singleton, and then, $\sufa[p_i]$ is a net occurrence if and only if the condition holds.
We can also compute $\nocc(x)$ if we store $\sufa[i]$ for all run's boundaries $i$, which can be precomputed using LF-mapping $O(n)$ times.

Since we can check the conditions for NSMRs and SMRs along with the traversal of $\rmr$, the time and space complexities of r-enum are retained.
\begin{theorem}\label{theo:csr}
  Given the RLBWT of $T$, we can compute the NSMRs and SMRs, and the context diversity for every RMR in $O(n)$ time and $O(r)$ working space.
  The net occurrences of output patterns can also be computed in the same time and space complexities.
\end{theorem}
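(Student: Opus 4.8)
The plan is to piggyback on the traversal of \Cref{theo:r-enum}, augmenting the visit of each node $x \in \rmr$ with only $O(|\lcontext(x)| + |\rcontext(x)|)$ extra work so that the overall $O(n)$-time, $O(r)$-space budget is preserved. Recall that when r-enum visits $x$ with $\repr(x) = ([p..q], \{(c_i,[p_i..q_i])\}_{i=1}^{k}, |x|)$ it already issues a range distinct query on $\bwt[p..q]$ to discover the left extensions $\lcontext(x)$ and, for each $a \in \lcontext(x) \setminus \{\$\}$, assembles $\rext(ax)$ in the array entry $A[a]$. I would reuse exactly these quantities, so that no query beyond the standard r-enum step is incurred.

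First, the context diversity. Since $|\rcontext(x)| = k$ is the number of right-extension intervals in $\rext(x)$ and $|\lcontext(x)|$ is the number of triples returned by the range distinct query on $\bwt[p..q]$, the pair $\pqcd{|\lcontext(x)|}{|\rcontext(x)|}$ is obtained at no extra asymptotic cost. For SMRs, I would use the characterization that $x$ is an SMR iff no occurrence of $x$ can be extended on either side without loss, which holds exactly when $|\lcontext(x)| = |\rcontext(x)| = |\occ(x)| = q - p + 1$; this is a direct comparison of the just-computed context diversity against the interval length.

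The substantive step is deciding membership in $\nsmr$ and reporting net occurrences. By \Cref{lem:nocc_iff}, a position $i \in [p..q]$ yields a net occurrence $\sufa[i]$ of $x$ iff $\bwt[i]$ is unique in $\bwt[p..q]$ and $T[\sufa[i]..\sufa[i]+|x|]$ is unique in $T$; the latter says $xc$ is unique, so only the singleton right-extension intervals $[p_i..q_i]$ with $p_i = q_i$ can host a net occurrence. For each such candidate let $a = \bwt[p_i]$. The uniqueness of $a$ in $\bwt[p..q]$ is equivalent to $\sufi(ax)$ being a singleton, and whether $\sufi(ax)$ is a singleton can be read off in $O(1)$ time from $A[a]$ (maintaining, while $\rext(ax)$ is appended, a flag that records whether $A[a]$ holds a single singleton interval), with the case $a = \$$ handled by the exception set for $\emptystr$. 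Thus each candidate is tested in $O(1)$ time, $x$ is declared an NSMR iff at least one candidate passes, and the passing ranks $p_i$ are precisely the net occurrences lying in $[p..q]$.

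Finally, to emit the text positions of the net occurrences rather than merely their ranks, I would exploit \Cref{lem:nocc_rlbwt}: every net occurrence lies at a run boundary of $\bwt$, so it suffices to tabulate $\sufa[i]$ for all $O(r)$ run boundaries once, by a single pass applying LF-mapping $O(n)$ times before the traversal; a qualifying $p_i$ is necessarily such a boundary, so its text position is a table lookup. The per-node overhead is $O(|\lcontext(x)| + |\rcontext(x)|)$, which charges to distinct Weiner links and hence sums to $O(n)$ across $\rmr$, while the auxiliary structures ($A[1..\sigma]$ and the run-boundary $\sufa$ table) occupy $O(r)$ space since $\sigma \le r$; this matches \Cref{theo:r-enum}. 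I expect the main obstacle to be arguing the NSMR test crisply, namely verifying that the ``$\bwt[i]$ unique in $\bwt[p..q]$'' condition of \Cref{lem:nocc_iff} is faithfully captured by the singleton test on $\sufi(ax)$ recovered from $A[a]$, and that this information is genuinely available at the moment $x$ is visited, so that the augmentation is confined to constant work per candidate.
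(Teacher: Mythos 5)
Your proposal is correct and follows essentially the same route as the paper: context diversity from $k$ and the range distinct output, the SMR test $|\lcontext(x)| = |\rcontext(x)| = q-p+1$, the NSMR test restricted by \Cref{lem:nocc_iff} to singleton right-extension intervals with the left condition checked via the singleton status of $\sufi(ax)$ in $A[a]$, and net occurrences reported via $\sufa$ values precomputed at run boundaries with $O(n)$ LF-mapping steps. The only quibble is your justification of the $a=\$$ case: it passes simply because $\$$ is unique in $\bwt$ (hence in $\bwt[p..q]$), not because of the exception set for $\emptystr$.
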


\section{An $O(r)$-size data structure for NF-queries}\label{sec:NFqueries}
In this section, we tackle the problem called \textsc{Single-NF}~\cite{2024GuoEWZ_ExploitNewProperOfStrin_CPM} on RLBWTs:
We consider preprocessing the RLBWT of $T$ to construct a data structure of size $O(r)$ so that we can support \emph{NF-queries} of computing $\nfreq(P)$ for any query pattern $P$.

We define the \emph{reversed trie} for a set $S$ of strings as follows:
\begin{itemize}
  \item The nodes consist of the suffixes of strings in $S$.
  \item There is an edge from node $x$ to $ax$ with a character $a$.
\end{itemize}

\begin{theorem}
  Given the RLBWT of $T$, we can build an $O(r)$-size data structure in $O(r)$ space and in $O(n)$ time with high probability or deterministic $O(n + |\nsmr| \log \log \min(\sigma, |\nsmr|))$ time to support NF-queries for any query pattern $P$ in $O(|P|)$ time.
\end{theorem}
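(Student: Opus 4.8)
The plan is to precompute $\nsmr$ together with the net frequencies and to store them in a compacted version of the reversed trie $\revtrie$ of $\nsmr$, which is then navigated with the help of the move data structure. First I would run the extended r-enum of \Cref{theo:csr} to obtain, in $O(n)$ time and $O(r)$ working space, every NSMR $x$ together with its SA-interval $\sufi(x)$, its length $|x|$, and its net frequency $\nfreq(x)$. By \Cref{theo:nocc_2r} the total number of net occurrences is less than $2r$, and since every element of $\nsmr$ has at least one net occurrence, we get $|\nsmr| < 2r$; hence all of this information fits in $O(r)$ space.

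Next I would build the compacted reversed trie of $\nsmr$: starting from the reversed trie whose nodes are the suffixes of the strings in $\nsmr$, contract every non-branching node that is not itself an NSMR into its incident edge. A leaf of this trie must be an NSMR that is not a proper suffix of another NSMR, so there are at most $|\nsmr|$ leaves, at most $|\nsmr|$ branching nodes, and at most $|\nsmr|$ explicitly marked NSMR nodes; after compaction $\revtrie$ therefore has $O(|\nsmr|) = O(r)$ nodes, and it can be assembled during r-enum's Weiner-link traversal, since following a Weiner link is exactly prepending a character, i.e.\ traversing a reversed-trie edge. At each retained node $v$ representing a string $x_v$ I would store $\sufi(x_v)$, the depth $|x_v|$, the value $\nfreq(x_v)$ when $x_v \in \nsmr$, and a dictionary of \Cref{lem:dic} mapping each branching character (the first character prepended along an outgoing edge) to the corresponding child. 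As the numbers of outgoing edges sum to $O(|\nsmr|)$ and each dictionary has domain size at most $\min(\sigma, |\nsmr|)$, building all of them costs $O(|\nsmr|)$ time with high probability or deterministic $O(|\nsmr| \log\log \min(\sigma, |\nsmr|))$ time. Finally I would build the move data structure of \Cref{lem:lfmove} for $O(1)$-time $\LF$ and $\FL$ in $O(r \log r) = O(n)$ time and $O(r)$ space, so that the total construction cost and space meet the claimed bounds.

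To answer an NF-query for $P = P[1..m]$ I would walk $\revtrie$ from the root reading $P$ from right to left, so that the $k$-th step prepends $P[m-k+1]$. The main obstacle is that a compacted edge may span many characters that are not stored explicitly, and we can afford neither to store edge labels nor to perform random access into $T$, so a naive match would not run in $O(|P|)$ time within $O(r)$ space. I would resolve this by a blind walk plus a single verification: at a node I consult its character dictionary in $O(1)$ time to select the edge whose branching character equals the current character of $P$ (returning $0$ if none exists), and I then skip the rest of that edge using only the stored depths, reading no edge-internal character; if $P$ would be exhausted strictly inside an edge, or ends at a retained node not marked as an NSMR, I return $0$. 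Otherwise $P$ ends exactly at a marked node $w$ with $|x_w| = |P|$, and it only remains to confirm $P = x_w$: using the stored position $p_w \in \sufi(x_w)$ and the identity $x_w[j] = \bwt[\FL^{j}(p_w)]$, I read $x_w$ left to right with $|P|$ constant-time $\FL$ steps on the move data structure and compare it against $P$ character by character, returning $\nfreq(w)$ if they agree and $0$ otherwise. The dictionary lookups along the path number $O(|P|)$ and the verification costs $O(|P|)$, so each query runs in the optimal $O(|P|)$ time.
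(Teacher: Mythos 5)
Your proposal is correct and matches the paper's construction essentially step for step: the same compacted reversed trie for $\nsmr$ (with $|\nsmr| < 2r$) built online during the r-enum traversal, storing at each node its depth, a representative position in its SA-interval, its net frequency, and a dictionary of \Cref{lem:dic} over branching characters, with edge labels recovered via FL-mapping on the move data structure rather than stored explicitly. The only (immaterial) difference is in the query procedure: the paper verifies each compacted edge on the fly by FL-stepping along it as it descends, while you do a blind descent using only depths and a single $O(|P|)$ verification from the final node; both yield optimal $O(|P|)$ query time.
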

\begin{proof}
  We execute r-enum of \cref{theo:csr}, which actually performs a breadth-first traversal of the reversed trie for $\rmr$ in $O(n)$ time and $O(r)$ space.
  Our idea is to build the compacted reversed trie $\revtrie$ for $\nsmr$ during the traversal, which can be stored in $O(|\nsmr|) = O(r)$ space because $|\nsmr| \le 2r$ by \cref{lem:nocc_rlbwt}.
  For each node $x$ of $\revtrie$, which corresponds to a string in $\nsmr$ or a branching node, we store the string length $|x|$ and an integer $i_{x}$ in $\sufi(x)$ so that the edge label between $x$ and its parent $y$ can be retrieved by using FL-mapping $|x| - |y|$ times from $i_{x}$.
  For a node $x \in \nsmr$, we also store the net frequency of $x$.
  See \cref{fig:singlenf} for an illustration of $\revtrie$.
  
  We show that we can build $\revtrie$ in $O(n)$ time and $O(r)$ working space during the breadth-first traversal of r-enum.
  For each string depth $t$, we maintain the compacted reversed trie $\revtrie_{t}$ for the set $\rmr_t \cup \nsmr_{\le t}$, where $\rmr_t$ is the set of RMRs of length $t$ and $\nsmr_{\le t}$ is the set of NSMRs of length $\le t$.
  Since $|\rmr_t| = O(r)$ and $|\nsmr_{\le t}| = O(r)$, the size of $\revtrie_{t}$ is $O(r)$.
  In the next round of the breadth-first traversal, r-enum extends every string in $\rmr_t$ by one character to the left to get $\rmr_{t+1}$.
  We update $\revtrie_{t}$ to $\revtrie_{t+1}$ by processing $x \in \rmr_t$ as follows, where $A_x = \{ ax \mid ax \in \rmr_{t+1} \}$:
  \begin{itemize}
    \item If $|A_x| \ge 2$, we add new edges from $x$ to the strings in $A_x$.
    \item If $|A_x| = 1$, we add a new edge from $x$ to the string in $A_x$, and remove the non-branching node $x$ if $x$ is not an NSMR.
    \item If $|A_x| = 0$ and $x \notin \nsmr$, we remove the node $x$ and the edge to its parent $y$. We also remove $y$, if it becomes a non-branching node not in $\nsmr$.
  \end{itemize}
  In total, these tasks can be done in $O(n)$ time and $O(r)$ space along with r-enum of \cref{theo:csr}.
  For each branching node of $\revtrie$ with $k \le \min(\sigma, |\nsmr|)$ outgoing edges, we build a dictionary of \cref{lem:dic} in $O(k)$ space and $O(k)$ time w.h.p. or deterministic $O(k \log \log \min(\sigma, k))$ time so that we can decide which child to proceed by the first characters of edges in $O(1)$ time.
  The construction time of the dictionaries for all branching nodes is bounded by $O(|\nsmr|)$ time w.h.p. or deterministic $O(|\nsmr| \log \log \min(\sigma, |\nsmr|))$ time.

  Given a query pattern $P$, we read the characters of $P$ from right to left and traverse $\revtrie$ from the root in $O(|P|)$ time.
  If we can reach a node $P \in \nsmr$, we output $\nfreq(P)$ stored in the node.
\end{proof}
We can also support queries of computing the net occurrences of a query pattern $P$ in $O(|P|+|\nocc(P)|)$ time by storing $\nocc(x)$ for each node $x \in \nsmr$ in $\revtrie$, which can be stored in $O(r)$ space in total due to \cref{lem:nocc_rlbwt}.

\begin{figure}[t]
  \centering
  \includegraphics[scale=0.34]{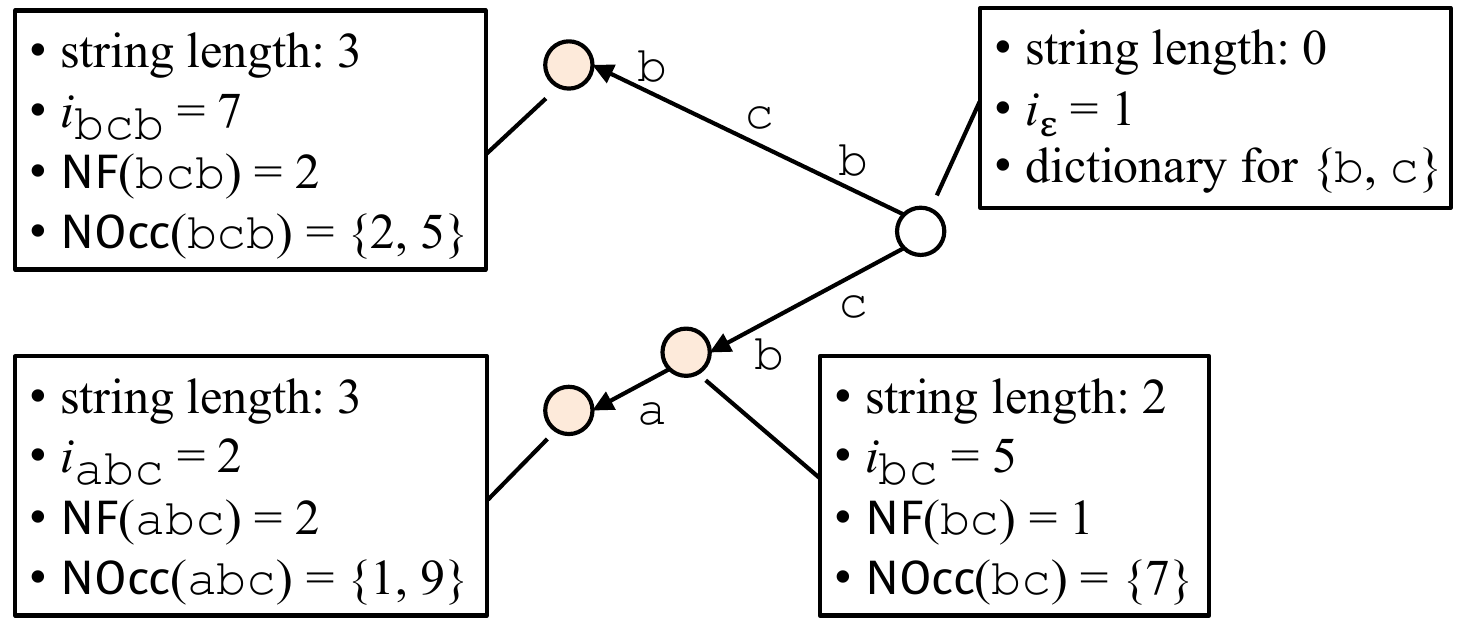}
  \caption{
    An illustration of the compacted reversed trie $\revtrie$ for $\nsmr = \{ \idtt{bc}, \idtt{abc}, \idtt{bcb} \}$ in our running example $T = \idtt{abcbbcbcabc\$}$.
    A node corresponding to an NSMR is highlighted.
    Storing $\nocc(\cdot)$ is optional.
    Note that edge labels are not stored explicitly.
    For example, the string $\idtt{bcb}$ on the edge from $\emptystr$ to $\idtt{bcb}$ is retrieved using FL-mapping $i_{\idtt{bcb}} - i_{\emptystr} = 3$ times from $i_{\idtt{bcb}} = 7$ when necessary.
  }
  \label{fig:singlenf}
\end{figure}

\section{Conclusions}\label{sec:conclusions}
We improved the time complexity of r-enum, and extended its enumeration power for other context-sensitive repeats in the literature.
We also proposed the first data structure of size $O(r)$ to support NF-queries in optimal time, which can be regarded as an $O(r)$-space representation of $\nsmr$ with the help of RLBWTs.
As a combinatorial property, we formally proved that the total number of net occurrences, as well as MUSs, is upper bounded by $O(r)$.

We note that our enumeration algorithm runs independently of the alphabet size $\sigma$ thanks to the output-optimal range distinct queries of~\cref{lem:rdq}.
The space usage depending on $\sigma$ is hidden behind $O(r)$.
We removed the dependence on $\sigma$ in the NF-query time by using a dictionary with worst-case constant-time lookup.

Finally, we remark that our work leads to the first algorithm to compute the sorted list of net occurrences and MUSs in $O(n)$ time and $O(r)$ space:
We enumerate all net occurrences by \cref{theo:r-enum}, and sort them in $O(n)$ time and $O(r)$ space by \cite[Lemma 13]{2020NishimotoT_FasterQueriesOnBwtRuns_X3}.

\bibliography{refs}

@InProceedings{1973Weiner_LinearPatterMatchAlgor,
	Title = {Linear pattern-matching algorithms},
	Author = {Peter Weiner},
	Booktitle = {Proc. 14th IEEE Ann. Symp. on Switching and Automata Theory},
	Year = {1973},
	Pages = {1-11},
}

@Article{1985BlumerBHECS_SmallAutomRecogSubworOf,
	Title = {The Smallest Automaton Recognizing the Subwords of a Text},
	Author = {Anselm Blumer and Janet Blumer and David Haussler and Andrzej Ehrenfeucht and M. T. Chen and Joel Seiferas},
	Journal = {Theoretical Computer Science},
	Year = {1985},
	Pages = {31--55},
	Volume = {40},
}

@Article{1987BlumerBHME_ComplInverFilesForEffic,
	author = {Anselm Blumer and J. Blumer and David Haussler and Ross M. McConnell and Andrzej Ehrenfeucht},
	title = {Complete inverted files for efficient text retrieval and analysis},
	journal = {J. {ACM}},
	volume = {34},
	number = {3},
	pages = {578--595},
	year = {1987},
}

@Article{1993ManberM_SuffixArrayNewMethodFor,
	Title = {Suffix Arrays: {A} New Method for On-Line String Searches},
	Author = {Udi Manber and Eugene W. Myers},
	Journal = {{SIAM} J. Comput.},
	Year = {1993},
	Number = {5},
	Pages = {935--948},
	Volume = {22},
}

@TechReport{1994BurrowsW_BlockSortinLosslDataCompr,
	author = {Michael Burrows and David J Wheeler},
	title = {A block-sorting lossless data compression algorithm},
	institution = {HP Labs},
	year = {1994},
}

@Article{2005InenagaHSTAMP_LineConstOfCompacDirec,
	Title = {On-line construction of compact directed acyclic word graphs},
	Author = {Shunsuke Inenaga and Hiromasa Hoshino and Ayumi Shinohara and Masayuki Takeda and Setsuo Arikawa and Giancarlo Mauri and Giulio Pavesi},
	Journal = {Discrete Applied Mathematics},
	Year = {2005},
	Number = {2},
	Pages = {156--179},
	Volume = {146},
}

@Book{1997Gusfield_AlgorOnStrinTreesAnd,
	Title = {Algorithms on Strings, Trees, and Sequences - Computer Science and Computational Biology},
	Author = {Dan Gusfield},
	Publisher = {Cambridge University Press},
	Year = {1997},
}

@Article{2009BecherDH_EfficComputOfAllPerfec,
	author = {Ver{\'{o}}nica Becher and Alejandro Deymonnaz and Pablo Ariel Heiber},
	title = {Efficient computation of all perfect repeats in genomic sequences of up to half a gigabyte, with a case study on the human genome},
	journal = {Bioinform.},
	volume = {25},
	number = {14},
	pages = {1746--1753},
	year = {2009},
}

@InProceedings{2009OkanoharaT_TextCategWithAllSubst_SDM,
	author = {Daisuke Okanohara and Jun'ichi Tsujii},
	title = {Text Categorization with All Substring Features},
	booktitle = {Proc. {SIAM} International Conference on Data Mining ({SDM}) 2009},
	pages = {838--846},
	year = {2009},
}

@InProceedings{2011MasadaTSO_ClustDocumWithMaximSubst,
	author = {Tomonari Masada and Atsuhiro Takasu and Yuichiro Shibata and Kiyoshi Oguri},
	title = {Clustering Documents with Maximal Substrings},
	booktitle = {Proc. 13th International Conference on Enterprise Information Systems {ICEIS} 2011},
	pages = {19--34},
	year = {2011},
}

@TechReport{2008NicolasRSPCDTVM_ModelLocalRepeatOnGenom,
	TITLE = {{Modeling local repeats on genomic sequences}},
	AUTHOR = {Nicolas, Jacques and Rousseau, Christine and Siegel, Anne and Peterlongo, Pierre and Coste, Fran{\c c}ois and Durand, Patrick and Tempel, S{\'e}bastien and Valin, Anne-Sophie and Mah{\'e}, Fr{\'e}d{\'e}ric},
	URL = {https://inria.hal.science/inria-00353690},
	TYPE = {Research Report},
	NUMBER = {RR-6802},
	PAGES = {43},
	INSTITUTION = {{INRIA}},
	YEAR = {2008},
}

@InProceedings{2013Galle_BagOfRepeatRepresOf_SIGIR,
	author = {Matthias Gall{\'{e}}},
	editor = {Gareth J. F. Jones and
                  Paraic Sheridan and
                  Diane Kelly and
                  Maarten de Rijke and
                  Tetsuya Sakai},
	title = {The bag-of-repeats representation of documents},
	booktitle = {Proc. 36th International {ACM} {SIGIR} conference on research and development in Information Retrieval 2013},
	pages = {1053--1056},
	publisher = {{ACM}},
	year = {2013},
}

@InProceedings{2012BellerBO_SpaceEfficComputOfMaxim_SPIRE,
	author = {Timo Beller and Katharina Berger and Enno Ohlebusch},
	editor = {Liliana Calder{\'{o}}n{-}Benavides and
                  Cristina N. Gonz{\'{a}}lez{-}Caro and
                  Edgar Ch{\'{a}}vez and
                  Nivio Ziviani},
	title = {Space-Efficient Computation of Maximal and Supermaximal Repeats in Genome Sequences},
	booktitle = {Proc. 19th International Symposium on String Processing and Information Retrieval ({SPIRE}) 2012},
	series = {Lecture Notes in Computer Science},
	volume = {7608},
	pages = {99--110},
	publisher = {Springer},
	year = {2012},
}

@Article{2015OhlebuschB_AlphabIndepAlgorForFindin,
	author = {Enno Ohlebusch and Timo Beller},
	title = {Alphabet-independent algorithms for finding context-sensitive repeats in linear time},
	journal = {J. Discrete Algorithms},
	volume = {34},
	pages = {23--36},
	year = {2015},
}

@InProceedings{2014GalleT_ContexDiverRepeatAndTheir_LATA,
	author = {Matthias Gall{\'{e}} and Mat{\'{\i}}as Tealdi},
	editor = {Adrian{-}Horia Dediu and
                  Carlos Mart{\'{\i}}n{-}Vide and
                  Jos{\'{e}} Luis Sierra{-}Rodr{\'{\i}}guez and
                  Bianca Truthe},
	title = {On Context-Diverse Repeats and Their Incremental Computation},
	booktitle = {Proc. 8th International Conference on Language and Automata Theory and Applications ({LATA}) 2014},
	series = {Lecture Notes in Computer Science},
	volume = {8370},
	pages = {384--395},
	publisher = {Springer},
	year = {2014},
}

@Article{2018GalleT_EmphxRepeatNewTaxonOf,
	author = {Matthias Gall{\'{e}} and Mat{\'{\i}}as Tealdi},
	title = {\emph{xkcd}-repeats: {A} new taxonomy of repeats defined by their context diversity},
	journal = {J. Discrete Algorithms},
	volume = {48},
	pages = {1--16},
	year = {2018},
}

@Article{2020BelazzouguiCKM_LinearTimeStrinIndexAnd,
	file = {2020BelazzouguiCKM_LinearTimeStrinIndexAnd_X.pdf},
	author = {Djamal Belazzougui and Fabio Cunial and Juha K{\"{a}}rkk{\"{a}}inen and Veli M{\"{a}}kinen},
	title = {Linear-time String Indexing and Analysis in Small Space},
	journal = {{ACM} Trans. Algorithms},
	volume = {16},
	number = {2},
	pages = {17:1--17:54},
	year = {2020},
}

@InProceedings{2002Muthukrishnan_EfficAlgorForDocumRetriev_SODA,
	Title = {Efficient algorithms for document retrieval problems},
	Author = {S. Muthukrishnan},
	Booktitle = {Proc. 13th Annual {ACM-SIAM} Symposium on Discrete Algorithms ({SODA}) 2002},
	Year = {2002},
	Pages = {657--666},
}

@Article{2015BelazzouguiN_OptimLowerAndUpperBound,
	author = {Djamal Belazzougui and Gonzalo Navarro},
	title = {Optimal Lower and Upper Bounds for Representing Sequences},
	journal = {{ACM} Trans. Algorithms},
	volume = {11},
	number = {4},
	pages = {31:1--31:21},
	year = {2015},
}

@Article{2000Willard_ExaminComputGeometVanEmde,
	author = {Dan E. Willard},
	title = {Examining Computational Geometry, Van Emde Boas Trees, and Hashing from the Perspective of the Fusion Tree},
	journal = {{SIAM} J. Comput.},
	volume = {29},
	number = {3},
	pages = {1030--1049},
	year = {2000},
}

@InProceedings{2008Ruzic_ConstEfficDictionInClose_ICALP,
	Title = {Constructing Efficient Dictionaries in Close to Sorting Time},
	Author = {Milan Ruzic},
	Booktitle = {Proc. 35th International Colloquium on Automata, Languages and Programming ({ICALP}) 2008},
	Year = {2008},
	Pages = {84--95},
}

@Article{2001LinY_ExtracChinesFrequenStrinWithout,
	author = {Yih{-}Jeng Lin and Ming{-}Shing Yu},
	title = {Extracting Chinese Frequent Strings Without Dictionary From a Chinese corpus, its Applications},
	journal = {J. Inf. Sci. Eng.},
	volume = {17},
	number = {5},
	pages = {805--824},
	year = {2001},
}

@Article{2004LinY_ProperAndFurthApplicOf,
	author = {Yih{-}Jeng Lin and Ming{-}Shing Yu},
	title = {The Properties and Further Applications of Chinese Frequent Strings},
	journal = {Int. J. Comput. Linguistics Chin. Lang. Process.},
	volume = {9},
	number = {1},
	year = {2004},
}

@InProceedings{2024GuoEWZ_ExploitNewProperOfStrin_CPM,
	author = {Peaker Guo and Patrick Eades and Anthony Wirth and Justin Zobel},
	editor = {Shunsuke Inenaga and
                  Simon J. Puglisi},
	title = {Exploiting New Properties of String Net Frequency for Efficient Computation},
	booktitle = {Proc. 35th Annual Symposium on Combinatorial Pattern Matching ({CPM}) 2024},
	series = {LIPIcs},
	volume = {296},
	pages = {16:1--16:16},
	publisher = {Schloss Dagstuhl - Leibniz-Zentrum f{\"{u}}r Informatik},
	year = {2024},
}

@InProceedings{2024OhlebuschBO_FasterComputOfChinesFrequen_SPIRE,
	author = {Enno Ohlebusch and Thomas B{\"{u}}chler and Jannik Olbrich},
	editor = {Zsuzsanna Lipt{\'{a}}k and
                  Edleno Silva de Moura and
                  Karina Figueroa and
                  Ricardo Baeza{-}Yates},
	title = {Faster Computation of Chinese Frequent Strings and Their Net Frequencies},
	booktitle = {Proc. 31st International Symposium on String Processing and Information Retrieval ({SPIRE}) 2024},
	series = {Lecture Notes in Computer Science},
	volume = {14899},
	pages = {249--256},
	publisher = {Springer},
	year = {2024},
}

@InProceedings{2024GuoUWZ_OnlinComputOfStrinNet_SPIRE,
	author = {Peaker Guo and Seeun William Umboh and Anthony Wirth and Justin Zobel},
	editor = {Zsuzsanna Lipt{\'{a}}k and
                  Edleno Silva de Moura and
                  Karina Figueroa and
                  Ricardo Baeza{-}Yates},
	title = {Online Computation of String Net Frequency},
	booktitle = {Proc. 31st International Symposium on String Processing and Information Retrieval ({SPIRE}) 2024},
	series = {Lecture Notes in Computer Science},
	volume = {14899},
	pages = {159--173},
	publisher = {Springer},
	year = {2024},
}

@Misc{2024Inenaga_FasterAndSimplOnlinComput,
	note = {arXiv:2410.06837},
	author = {Shunsuke Inenaga},
	title = {Faster and Simpler Online Computation of String Net Frequency},
	journal = {CoRR},
	volume = {abs/2410.06837},
	year = {2024},
}

@InProceedings{2025GuoK_NetOccurInFibonAnd_CPM,
	author = {Peaker Guo and Kaisei Kishi},
	editor = {Paola Bonizzoni and
                  Veli M{\"{a}}kinen},
	title = {Net Occurrences in Fibonacci and Thue-Morse Words},
	booktitle = {Proc. 36th Annual Symposium on Combinatorial Pattern Matching ({CPM}) 2025},
	series = {LIPIcs},
	volume = {331},
	pages = {16:1--16:22},
	publisher = {Schloss Dagstuhl - Leibniz-Zentrum f{\"{u}}r Informatik},
	year = {2025},
}

@InProceedings{2025MienoI_SpaceEfficOnlinComputOf_CPM,
	author = {Takuya Mieno and Shunsuke Inenaga},
	editor = {Paola Bonizzoni and
                  Veli M{\"{a}}kinen},
	title = {Space-Efficient Online Computation of String Net Occurrences},
	booktitle = {Proc. 36th Annual Symposium on Combinatorial Pattern Matching ({CPM}) 2025},
	series = {LIPIcs},
	volume = {331},
	pages = {23:1--23:13},
	publisher = {Schloss Dagstuhl - Leibniz-Zentrum f{\"{u}}r Informatik},
	year = {2025},
}

@Article{2011IlieS_MinimUniqueSubstAndMaxim,
	author = {Lucian Ilie and William F. Smyth},
	title = {Minimum Unique Substrings and Maximum Repeats},
	journal = {Fundam. Informaticae},
	volume = {110},
	number = {1-4},
	pages = {183--195},
	year = {2011},
}

@InProceedings{2021NishimotoT_REnumEnumerOfCharac_CPM,
	author = {Takaaki Nishimoto and Yasuo Tabei},
	editor = {Pawel Gawrychowski and
                  Tatiana Starikovskaya},
	title = {R-enum: Enumeration of Characteristic Substrings in {BWT}-runs Bounded Space},
	booktitle = {Proc. 32nd Annual Symposium on Combinatorial Pattern Matching ({CPM}) 2021},
	series = {LIPIcs},
	volume = {191},
	pages = {21:1--21:21},
	publisher = {Schloss Dagstuhl - Leibniz-Zentrum f{\"{u}}r Informatik},
	year = {2021},
}

@InProceedings{2021NishimotoT_OptimTimeQueriesOnBwt_ICALP,
	author = {Takaaki Nishimoto and Yasuo Tabei},
	editor = {Nikhil Bansal and
               Emanuela Merelli and
               James Worrell},
	title = {Optimal-Time Queries on {BWT}-Runs Compressed Indexes},
	booktitle = {Proc. 48th International Colloquium on Automata, Languages and Programming ({ICALP}) 2021},
	series = {LIPIcs},
	volume = {198},
	pages = {101:1--101:15},
	publisher = {Schloss Dagstuhl - Leibniz-Zentrum f{\"{u}}r Informatik},
	year = {2021},
}

@Misc{2020NishimotoT_FasterQueriesOnBwtRuns_X3,
	note = {arXiv:2006.05104v3},
	author = {Takaaki Nishimoto and Yasuo Tabei},
	title = {Optimal-Time Queries on {BWT}-Runs Compressed Indexes},
	volume = {abs/2006.05104},
	year = {2021},
}

@InProceedings{2015BelazzouguiC_SpaceEfficDetecOfUnusual_SPIRE,
	author = {Djamal Belazzougui and Fabio Cunial},
	editor = {Costas S. Iliopoulos and
               Simon J. Puglisi and
               Emine Yilmaz},
	title = {Space-Efficient Detection of Unusual Words},
	booktitle = {Proc. 22nd International Symposium on String Processing and Information Retrieval ({SPIRE}) 2015},
	series = {Lecture Notes in Computer Science},
	volume = {9309},
	pages = {222--233},
	publisher = {Springer},
	year = {2015},
}

@Article{2015ClaudeNP_WavelMatrixEfficWavelTree,
	author = {Francisco Claude and Gonzalo Navarro and Alberto Ord{\'{o}}{\~{n}}ez Pereira},
	title = {The wavelet matrix: An efficient wavelet tree for large alphabets},
	journal = {Inf. Syst.},
	volume = {47},
	pages = {15--32},
	year = {2015},
}

@Article{JDA2004AbouelhodaKO_ReplacSuffixTreesWithEnhan,
	Title = {Replacing suffix trees with enhanced suffix arrays},
	Author = {Mohamed Ibrahim Abouelhoda and Stefan Kurtz and Enno Ohlebusch},
	Journal = {Journal of Discrete Algorithms},
	Year = {2004},
	Number = {1},
	Pages = {53--86},
	Volume = {2}
}

\end{document}